\newcommand{\bbR}{\mathbb{R}}
\newcommand{\true}{\mathtt{tt}}
\newcommand{\until}[2]{\mathbf{U}_{[#1,#2]}}
\newcommand{\eventually}[2]{\mathbf{F}_{[#1,#2]}}
\newcommand{\always}[2]{\mathbf{G}_{[#1,#2]}}
\renewcommand{\phi}{\varphi}
\begin{document}

\mainmatter

\title{Smoothed Model Checking for Uncertain Continuous Time Markov Chains
\thanks{We thank Botond Cseke for sharing with us his code for Expectation Propagation,  Alberto Policriti for  highlighting the analogy with Smoothed Complexity Analysis, and Jane Hillston and Vincent Danos for useful conversations. D.M. and G.S. acknowledge support from the ERC under grant MLCS 306999. L.B. acknowledges partial support from EU-FET project QUANTICOL (nr. 600708), by FRA-UniTS, and  by the German Research Council (DFG) as part of the Cluster of Excellence on Multimodal Computing and Interaction at Saarland University.}}
\author{Luca Bortolussi\inst1\inst2\inst3 \and Dimitrios Milios\inst4 \and Guido Sanguinetti\inst4\inst5 }
\institute{Modelling and Simulation Group, University of Saarland, Germany \and Department of Mathematics and  Geosciences, University of Trieste \and 
CNR/ISTI, Pisa, Italy \and
School of Informatics, University of Edinburgh \and SynthSys, Centre for Synthetic and Systems Biology, University of Edinburgh }

\maketitle

\begin{abstract}
We consider the problem of computing the satisfaction probability of a formula for stochastic models with parametric uncertainty. We show that this satisfaction probability is a smooth function of the model parameters. This enables us to devise a novel Bayesian statistical algorithm which performs model checking simultaneously for all values of the model parameters from observations of truth values of the formula over individual runs of the model at isolated parameter values. This is achieved by exploiting the smoothness of the satisfaction function: by modelling explicitly correlations through a prior distribution over a space of smooth functions (a Gaussian Process), we can condition on observations at individual parameter values to construct an analytical approximation of the function itself. Extensive experiments on non-trivial case studies show that the approach is accurate and several orders of magnitude faster than naive parameter exploration with standard statistical model checking methods.
\end{abstract}

\section{Introduction}
Computational verification of logical properties by model checking is one of the great success stories of Theoretical Computer Science, with profound practical implications. Robust and mature tools such as PRISM \cite{Kwiatkowska2004} implement probabilistic model checking, enabling the quantification of the truth probability of a formula for a wide variety of stochastic models and logics. In many cases, however, analytical quantification is impossible for computational reasons; nevertheless, the possibility of drawing large number of samples from generative models such as Continuous Time Markov Chains (CTMCs) has led to the deployment of statistical tools. Statistical model checking (SMC, \cite{Younes2006,Zuliani2010}) repeatedly draws independent samples (runs) from the model to estimate satisfaction probabilities as averages of satisfactions on individual runs; by the law of large numbers, these averages will converge to the true probability in the limit when the sample size becomes large, and general asymptotic results permit to bound (probabilistically) the estimation error.

Both analytical and statistical tools for model checking however start from the premises that the underlying mathematical model is fully specified (or at least that a mechanism to draw independent and identically distributed samples exists in the case of SMC). This is both conceptually and practically problematic: models are abstractions of reality informed by domain expertise. Condensing the domain expertise in a single vector of parameter values is at best an approximation. While in some domains with advanced standardisation and high precision (e.g. microelectronics) this may be an acceptable approximation, in other fields (e.g. systems biology) parametric uncertainty can be very considerable. While parameter estimation from measurements has seen considerable progress in recent years \cite{opper2007variational,Andreychenko2012,Bortolussi2013qest,Georgoulas2013}, uncertainty can never be completely eliminated.  In these application areas, it would seem that acceptance of the inherent uncertainty is the natural way forward, and that alternative semantics such as Interval Markov Chains \cite{Katoen2007} or Constraints Markov Chains \cite{caillaud2011constraint} should be preferred. Model checking methodologies for these semantics are however in their infancy and mostly rely on a reduction to continuous-time Markov Decision Processes \cite{Baier2005}, obtaining upper and lower bounds on the satisfaction probability \cite{Katoen2007,benedikt_ltl_2013}, or on exhaustive (and computationally intensive) exploration of the parameter space \cite{Brim2013,milan14}.

In this paper we define a novel, quantitative approach to model checking uncertain CTMCs. We start by defining the {\it satisfaction function} of a formula, the natural extension of the concept of satisfaction probability of a formula to the case of CTMCs with parametric uncertainty. We prove that, under mild conditions, the satisfaction function is a smooth function of the uncertain parameters of the CTMC. We then propose a novel statistical model checking approach which leverages this smoothness to transfer information on the satisfaction of the formula at nearby values of the uncertain parameters. We show that the satisfaction function can be approximated arbitrarily well by a sample from a {\it Gaussian Process} (GP), a non-parametric distribution over spaces of functions, and use the GP approach to obtain an analytical approximation to the satisfaction function. This enables us to predict the value (and related uncertainty) of the satisfaction probability {\it at all values of the uncertain parameters} from individual model simulations at a finite (and generally rather small) number of distinct parameter values. We term this whole approach {\it smoothed model checking} in analogy with the recently proposed smoothed complexity analysis, another traditional domain of discrete mathematics where embedding problems in a continuous framework has proved highly valuable \cite{Spielman2009}.

The rest of the paper is organised as follows: in the next section we introduce the mathematical background on formulae and model checking, proving in the following section that the satisfaction function is a smooth function of model parameters. 
We then introduce  our smoothed model checking framework and discuss the statistical tools needed.
We demonstrate the power of our approach on a number of non-trivial examples, and conclude the paper by discussing the broader implications of our approach.

\section{Background}

In this section, we briefly review Continuous Time Markov Chains (CTMC),  Metric interval Temporal Logic, and statistical model checking approaches. 

\subsection{Continuous Time Markov Chains}
\label{sec:CTMC}

A Continuous time Markov Chain (CTMC) $\mathcal{M}$ is a Markovian (i.e. memoryless) stochastic process defined on a finite or countable state space $S$ and evolving in continuous time \cite{Durrett2012}.  We will specifically consider population models of interacting agents \cite{Bortolussi2013}, which can be easily represented by 
\begin{itemize}
\item a vector of population variables $\vec{X}=(X_1,\ldots,X_n)$, counting the number of entities of each kind, and taking values in $S\subseteq\mathbb{N}^n$;
\item a finite set of reaction rules, describing how the system state can change. Each rule $\eta$ is a tuple $\eta = (\vec{r}_\eta,\vec{s}_\eta,\tau_\eta)$.  $\vec{r}_\eta$ (respectively $\vec{s}_\eta$) is a vector encoding how many agents are consumed (respectively produced) in the reaction, so that $\vec{v}_\eta = \vec{s}_\eta - \vec{r}_\eta$ gives the net change of agents due to the reaction. $\tau_\eta = \tau_\eta(\vec{X},\theta)$ is the rate function, associating to each reaction the rate of an exponential distribution, depending on the global state of the model and on a $d$ dimensional vector of \emph{model parameters},  $\theta$. Reaction rules are easily visualised in the chemical reaction style, as 
\[r_1 X_1 + \ldots r_n X_n \xrightarrow{\tau(\vec{X},\theta)} s_1 X_1 + \ldots s_n X_n\]
\end{itemize}
The dependency on $\theta$ is often crucial, with qualitatively different dynamics arising at different values of $\theta$ (for an example, we refer the reader to the two possible regimes for infection models described in section \ref{SIR}). We use the notation $\mathcal{M}_\theta$ to stress the dependency of $\mathcal{M}$ on parameters $\theta$.

Sample trajectories of a CTMC are (usually integer-valued) piecewise constant cadlag functions\footnote{A function $f:[0,T]\rightarrow \bbR^k$ is cadlag if it is right continuous and has left limits. The space of cadlag functions can be metrised by the Skorokhod distance, making it a complete and separable metric space, see e.g. \cite{billingsley_probability_2012}  for details.}, with jumps distributed exponentially in time.
Hence, we can think of CTMCs as a collection of random variables $\vec{X}(t)$ on the state space $S$, indexed by time $t\in [0,T]$, or as random functions $\vec{X}_{0:T}$ from $[0,T]$ to $S$ (which are necessarily piecewise constant due to the countable nature of $S$). In this second sense, a CTMC is equivalent to a measure on the (infinite dimensional) space of trajectories of the system, individually denoted by $\vec{x}_{0:T}$.


\subsection{Metric interval Temporal Logic}
\label{sec:MITL}
We will specify properties of CTMC trajectories by Metric interval Temporal Logic (MiTL),  see \cite{MC:Maler:FORMATS2004:MiTL}. 
MiTL is a linear temporal logic for continuous time trajectories, and we will consider its time-bounded fragment. The choice of MiTL is justified because time-bounded linear time properties are natural when reasoning about  systems like biological ones, 
yet our method can be straightforwardly applied to any other time-bounded linear time specification formalism equipped with a monitoring routine. 

Formally, the syntax of MiTL is given by the following grammar:
\[ \phi ::= \mathtt{tt}~|~\mu~|~\neg\phi~|~\phi_1\wedge\phi_2~|~\phi_1\until{T_1}{T_2}\phi_2,   \]
where $\true$ is the true formula, conjunction and negation are the standard boolean connectives, and there is only one temporal modality, the time-bounded until $\until{T_1}{T_2}$.  Atomic propositions  $\mu$ 
are (non-linear) inequalities  on population variables. 
A MiTL formula is interpreted over a real valued function of time $\vec{x}$, and its satisfaction relation is  given in a standard way, see e.g. \cite{MC:Maler:FORMATS2004:MiTL}. 
For instance, the rule for the temporal modality states that $\vec{x},t \models \phi_1\until{T_1}{T_2}\phi_2$ if and only if $\exists t_1 \in [t+T_1,t+T_2]$ such that $\vec{x},t_1 \models \phi_2$ and $\forall t_0\in [t,t_1]$, $\vec{x},t_0 \models \phi_1$.
Temporal modalities like time-bounded eventually and always can be defined in the usual way from the until operator: $\eventually{T_1}{T_2}\phi \equiv \true\until{T_1}{T_2} \phi$ and $\always{T_1}{T_2}\phi\equiv \neg \eventually{T_1}{T_2}\neg\phi$. 
MiTL can be interpreted in the probabilistic setting \cite{Zuliani2010,Chen2011,Bortolussi2013qest} by computing the path probability $ Pr(\phi=\true\vert\mathcal{M}_\theta)$ of a formula $\phi$,  
$ Pr(\phi=\true\vert\mathcal{M}_\theta) = Pr\left(\{\vec x_{0:T}~\vert~\vec x_{0:T},0\models \phi\} \vert \mathcal{M}_\theta \right)$, i.e.\  the probability of the set of (time-bounded) CTMC trajectories that satisfy the formula\footnote{We assume implicitly that $T$ is sufficiently large so that the truth of $\phi$ at time 0 can always be established from $\vec x$. The minimum of such times can be easily deduced from the formula $\phi$ itself, see \cite{Zuliani2010,MC:Maler:FORMATS2004:MiTL}}. 
%
%
%

\subsection{Model checking}\label{checkingSection}

Model checking algorithms for MiTL formulae against a CTMC model are essentially of two kinds. Numerical algorithms \cite{Chen2011} are very complex, and severely suffer from state space explosion. A more feasible alternative for population models is Statistical Model Checking (SMC \cite{Zuliani2010,Younes2006}), which is implemented in widely used model checking tools such as PRISM \cite{kwiatkowska_prism_2011} or MRMC \cite{katoen_markov_2005}.  SMC approaches  estimate the probability of a MiTL formula by combining simulation and statistical inference tools. More precisely, given a CTMC $\mathcal{M}_\theta$ with fixed parameters $\theta$, time-bounded CTMC trajectories are sampled by standard simulation algorithms, like SSA \cite{Gillespie1977}, and monitoring algorithms for MiTL \cite{MC:Maler:FORMATS2004:MiTL} are used to asses if the formula $\phi$ is satisfied for each sampled trajectory. In this way, one generates samples from a Bernoulli random variable  $Z_\phi$, equal to 1 if and only if $\phi$ is true.
SMC then uses standard statistical tools, either frequentist \cite{Younes2006} or Bayesian \cite{Zuliani2010}, to estimate the satisfaction probability $Pr(\phi\vert\mathcal{M}_\theta)$ or to test if  $P(\phi\vert\mathcal{M}_\theta) > q$ with a given confidence level $\alpha$.

\section{Uncertain CTMCs and the Satisfaction Function}
\label{sec:problem}

In this section, we define of the main object of our study,  the satisfaction function of a formula for uncertain CTMCs, and characterize its differentiability.  As we argued in the introduction, in many fields of application where CTMCs are used, the assumption of full parametric specification is untenable; rather, the natural object to consider is a family  $ \mathcal{M}_{\theta}$ of CTMCs, where the parameters $\theta$ vary in a domain $D$. We call such a family of CTMC models, indexed by $\theta\in D$, an \emph{uncertain CTMC}. Our interest is to quantify how the satisfaction of MiTL formulae against CTMCs drawn from an uncertain CTMC depends on the unknown parameters.
The following defines the central object of study in our work.
\begin{definition}
Let $\mathcal{M}_{\theta}$ be a family of CTMCs indexed by the variable $\theta\in D$ where $D$ is a compact subset of $\mathbb{R}^d$, and let $\phi$ be a formula in a suitable temporal logic (e.g. MITL). The \emph{ satisfaction function} $f_{\phi}\colon D\rightarrow[0,1]$ associated with $\phi$ is 
\[ f_{\phi}({\theta})=Pr(\phi=\true \vert \mathcal{M}_{{\theta}})\]
i.e., with each value $\theta$ in the space of parameters $D$ it associates the satisfaction probability of $\phi$ for the model with that parameter value.
\end{definition}
Before  proceeding to show that $f_\phi(\theta)$ is a smooth function of the model parameters, we observe that classic SMC can be applied only to models with a fixed value of parameters.  Estimating the whole satisfaction function $f_{\phi}$ by SMC would require a potentially large number of evaluations; while these can be performed in parallel, the overall number of simulations needed for accurate estimation would certainly be very large.
%

\subsection{Smoothness of the  satisfaction function}

The following lemma is standard but useful (see for instance \cite{Zuliani2010}). Recall that a CTMC $\mathcal{M}_\theta$ induces a measure $\mu_\theta$ over the space of trajectories of the system.
\begin{lemma}
Let $\mathcal{M}_\theta$ be a CTMC and $\phi$ be a MiTL formula. The subset of trajectories where the formula is satisfied is a measurable set under $\mu_\theta$.\end{lemma}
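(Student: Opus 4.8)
The plan is to prove measurability by structural induction on the MiTL formula $\phi$, reducing the satisfaction of each syntactic construct to measurable operations on the trajectory space. First I would fix the relevant $\sigma$-algebra: since $\mu_\theta$ is a measure on the space of cadlag trajectories $\vec{x}_{0:T}$ equipped with the Skorokhod topology (a complete separable metric space, as noted in the footnote), the natural choice is the Borel $\sigma$-algebra generated by this topology. The goal then becomes showing that for each subformula $\psi$ and each time $t$, the set $A_{\psi,t} = \{\vec{x}_{0:T} : \vec{x}_{0:T}, t \models \psi\}$ is Borel measurable; the lemma is the special case $\psi = \phi$, $t=0$.

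The inductive skeleton proceeds through the grammar. For the base cases, $A_{\true,t}$ is the whole space (trivially measurable), and for an atomic proposition $\mu$ given by an inequality on population variables, $A_{\mu,t}$ is the preimage under the (measurable) evaluation map $\vec{x}_{0:T} \mapsto \vec{x}(t)$ of a Borel subset of the state space $S$, hence measurable. For the inductive step, negation corresponds to complementation and conjunction to finite intersection, both of which preserve measurability. The only genuinely delicate case is the time-bounded until $\phi_1 \until{T_1}{T_2} \phi_2$, whose semantics involve an existential quantifier over $t_1 \in [t+T_1, t+T_2]$ and a universal quantifier over $t_0 \in [t,t_1]$ --- that is, an uncountable union intersected with an uncountable intersection, which is exactly where naive measurability arguments can fail.

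The hard part will be handling the until operator, since uncountable unions and intersections of measurable sets need not be measurable in general. The plan is to exploit the piecewise-constant, cadlag structure of CTMC trajectories to reduce these uncountable operations to countable ones. Concretely, for a fixed trajectory the truth value of any subformula changes only finitely many times on the bounded horizon $[0,T]$, so the inner universal condition ``$\forall t_0 \in [t,t_1]$, $\vec{x},t_0 \models \phi_1$'' and the existential ``$\exists t_1$'' can be characterized through a finite set of switching points. I would make this rigorous either by invoking standard results on measurability of hitting times and first-entrance times for cadlag processes (so that the time at which $\phi_2$ first becomes true, and the time at which $\phi_1$ fails, are measurable random variables), or by appealing to right-continuity to replace the uncountable quantifiers with quantifiers over a countable dense set of rationals in the interval. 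Either route turns the until set into a countable combination of the already-measurable sets $A_{\phi_1, q}$ and $A_{\phi_2, q}$, closing the induction.

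I would remark that the same argument works for the derived operators $\eventually{T_1}{T_2}$ and $\always{T_1}{T_2}$, since these are defined syntactically from until and negation and hence inherit measurability automatically. The overall structure is thus entirely standard; the single conceptual ingredient is that the countable-state, exponentially-jumping nature of the CTMC guarantees finitely many truth-value changes on any bounded interval, which is precisely what tames the uncountable quantifiers in the temporal modality.
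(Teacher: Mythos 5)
The paper does not actually prove this lemma: it labels it ``standard'' and defers to the reference on Bayesian statistical model checking (Zuliani et al.), so there is no in-paper argument to compare yours against. Your structural induction is the standard way to establish the result and is essentially correct: the Borel $\sigma$-algebra on the Skorokhod space is the right ambient structure, the base and boolean cases are immediate, and you correctly isolate the until modality as the only place where measurability could fail, taming its uncountable quantifiers via the finitely-many-switching-points property of piecewise constant cadlag trajectories. Two small points deserve care if you write this out in full. First, the ``replace reals by rationals'' route has an edge case: a cadlag boolean signal that is false at a point $s$ is false on some $[s,s+\epsilon)$, but if $s$ coincides with the right endpoint $t_1$ of the inner interval this half-open window may not meet $[t,t_1]$ in any rational; the hitting-time route (or quantifying $t_1$ over jump times and rationals jointly) avoids this. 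Second, you should dispose of the explosion set (trajectories with infinitely many jumps in $[0,T]$); for the non-explosive CTMCs considered here this set is null, and the trajectory space decomposes as a countable union of finite-dimensional pieces indexed by the number of jumps (exactly the decomposition the paper uses later in the proof of its Theorem~1), on each of which your finite-switching argument applies verbatim. With those caveats your plan closes the induction, and the derived operators indeed come for free.
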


We are now ready to prove our main theoretical results.
An important characterisation of the satisfaction function is given in the following theorem:
\begin{theorem}
\label{smoothnessProof}
Let $\phi$ be a MiTL formula and let $\mathcal{M}_{\theta}$ be a family of CTMCs indexed by the variable $\theta\in D$. Denote as $\boldsymbol{\tau}(\vec{X},\theta)$ the transition rates of the CTMCs and assume that these depend smoothly on the parameters $\theta$ and polynomially on the state vector of the system $\vec{X}$. Then, the satisfaction function of $\phi$ is a smooth function of the parameters, $f_{\phi}\in\mathcal{C}^{\infty}(D)$.\label{smoothness}
\end{theorem}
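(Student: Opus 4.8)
The plan is to express the satisfaction function as an integral over trajectory space and to exploit the fact that, once the discrete structure of a trajectory is fixed, all dependence on $\theta$ sits inside a smooth integrand over a fixed domain. Concretely, by the preceding lemma the satisfaction set is measurable, so $f_\phi(\theta)=\int \mathbf{1}[\vec{x}_{0:T}\models\phi]\,d\mu_\theta$. A time-bounded CTMC trajectory is determined by its discrete skeleton --- the finite sequence of reactions fired, equivalently the sequence of states $s_0\to s_1\to\cdots\to s_k$ --- together with the jump times $0<t_1<\cdots<t_k\le T$. I would therefore decompose the integral as a sum over skeletons $\pi$ of integrals over the jump-time simplex, where for each skeleton the density is the standard CTMC path likelihood: a product of rate functions $\tau_{\eta_i}(s_i,\theta)$ and exponential sojourn factors $\exp(-\Lambda(s_i,\theta)(t_{i+1}-t_i))$, with $\Lambda$ the total exit rate.

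The key observation is that, for a fixed skeleton, the truth value of $\phi$ depends on the trajectory only through the jump times. Because the trajectory is piecewise constant, the satisfaction of each atomic proposition $\mu$ is constant on each segment, and the MiTL monitoring procedure propagates this through the Boolean connectives and the time-bounded until by manipulating time intervals whose endpoints are affine in the $t_i$ and in the formula constants $T_1,T_2$. Hence the set of jump times for which $\phi$ holds is a finite union of polytopes $R_\pi$ inside the simplex, and --- crucially --- $R_\pi$ does not depend on $\theta$. This cleanly separates the problem: the geometry of integration is $\theta$-free, and all $\theta$-dependence is confined to a smooth integrand.

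Given this separation, for each fixed skeleton the term $f_\phi^\pi(\theta)=\int_{R_\pi} g_\pi(\vec{t},\theta)\,d\vec{t}$ is smooth in $\theta$: since the rates depend smoothly on $\theta$, the integrand $g_\pi$ and all its $\theta$-derivatives are jointly continuous and bounded on the compact set $R_\pi\times D$, so the Leibniz rule for differentiation under the integral sign applies and yields $f_\phi^\pi\in\mathcal{C}^\infty(D)$. It then remains to justify interchanging the infinite sum over skeletons with differentiation, i.e. to show that $\sum_\pi f_\phi^\pi$ and each of its formally differentiated series converge uniformly on $D$; a Weierstrass $M$-test then closes the argument, provided the tail over long skeletons (many jumps, large states) is controlled uniformly in $\theta$.

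This last step is the main obstacle. For bounded rates it is immediate, since the number of jumps in $[0,T]$ has uniformly Poisson-type tails; but population models have rates growing polynomially in the state, so I would first invoke non-explosivity of the process and then derive tail bounds that survive differentiation --- each $\theta$-derivative pulls down prefactors that are polynomial in $T$ times derivatives of the rates, and one must verify these grow slowly enough to be dominated by the sojourn-time exponential decay, uniformly over the compact $D$. The polynomial-in-state and smooth-in-$\theta$ hypotheses are precisely what make these uniform estimates go through, so that term-by-term differentiation is legitimate and $f_\phi\in\mathcal{C}^\infty(D)$.
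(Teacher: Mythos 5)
Your proposal follows essentially the same route as the paper's proof: decompose the path measure over discrete skeletons indexed by the number and types of jumps, observe that the satisfaction region in jump-time space is independent of $\theta$, differentiate under the integral on each finite-dimensional piece, and control the tail of the sum over skeletons. The only difference is one of completeness rather than of approach: the paper explicitly carries out the growth estimate you defer at the end, bounding the $\theta$-derivatives of the three factors of the path density by terms growing linearly in $k$ for state-independent rates and as $k^{c+1}$ for rates polynomial of degree $c$ in the state, which are then dominated by the Poisson factor over the number of jumps.
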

\begin{proof}
We begin the proof by elucidating the topology of the space of trajectories of the system.
\paragraph{Structure of space of trajectories} Let $\mathcal{T}$ denote the space of trajectories of the system starting at $t=0$ and ending at $t=T$. W.l.o.g. assume that from any state of the system there are at most $R$ different type of transitions that can occur (in a biochemical example, these would be all the possible reactions in the system). We notice that each trajectory in $[0,T]$ can be uniquely defined by specifying the number of transitions $k$, the sequence of types of transitions $r_i\quad i=1,\ldots,k$ and the times of the transitions $t_i,\quad i=1,\ldots,k$. This implies that $\mathcal{T}$ is a countable union of finite dimensional spaces $\mathcal{T}_k$ corresponding to trajectories where exactly $k$ transitions have occurred. Each of the $\mathcal{T}_k$ is in itself the union of $R^k$ copies of $[0,T]^k$; each of these copies corresponds to a sequence of $k$ transitions (there are $R^k$ such sequences), and a point in $[0,T]^k$ determines the times at which the transitions happened. Notice that, given a trajectory of the system, i.e. a point in $\mathcal{T}$, the formula $\phi$ is either true or false; this implies that the {\it set} of trajectories which satisfy the formula $\phi$ is independent of the model parameters, hence we only need to show that the measure on the set of trajectories depends smoothly on the parameters.
\paragraph{Measure on the set of trajectories - state-independent rates.} The topology of the space of trajectories corresponds to the well known factorisation of the measure over the space of trajectories; denoting a trajectory $\tau=\left[(r_1,t_1),\ldots,(r_k,t_k)\right]$, we have that \[
p(\tau)=p(k)p(r_1,\ldots,r_k\vert k)p(t_1,\ldots,t_k\vert r_1,\ldots,r_k).\]
In the case of state-independent rates, $p(k)$ is a Poisson distribution over the number of transitions (with mean $\mu$ given by the inverse of the sum of the rates times $T$), $p(r_1,\ldots,r_k\vert k)$ is the probability of the choice of the $k$ transition types (this is a rational function with positive denominator in the rates) and $p(t_1,\ldots,t_k\vert r_1,\ldots,r_k)$ is the product of the exponential probabilities of the $k$ waiting times. This density is clearly smooth w.r.t. the parameters (rates); to prove the integrability of its derivative, we have to show that the absolute value of the derivative is bound by a quantity which is integrable. As the domains $[0,T]$ and $D$ are bounded, all we have to verify is that the derivatives do not grow too fast as $k\rightarrow\infty$ to ensure integrability. This is easily verified directly for the three terms in the derivative:\begin{enumerate}
\item$\frac{\partial p(k)}{\partial \theta}=p(k-1)\frac{\partial \mu}{\partial \theta}$ where we exploited the fact that the derivative of a Poisson probability of $k$ events w.r.t. the mean is equal to the probability of $k-1$ events.
\item$\frac{\partial p(r_1,\ldots,r_k\vert k)}{\partial \theta}<kM$ where $M$ is a constant, as $p(r_1,\ldots,r_k\vert k)$ is a rational function with positive denominator and both numerator and denominator polynomials of degree at most $k$ in each of the rates.
\item$\frac{\partial p(t_1,\ldots,t_k\vert r_1,\ldots,r_k)}{\partial \theta}<kLp(t_1,\ldots,t_k\vert r_1,\ldots,r_k)$ where $L$ is a constant depending on the rates and on T, as $p(t_1,\ldots,t_k\vert r_1,\ldots,r_k)$ is a product of exponentials with exponents linear in the rates divided by a normalising constant which is a monomial of degree at most $k$ in the rates.
\end{enumerate} 
Therefore, in the case where the rates do not depend on the state of the system, all derivative terms grow at most linearly with $k$, which means they are still integrable once multiplied by the Poisson density $p(k)$.
\paragraph{Measure on the set of trajectories - state-dependent rates.} In the case where the rates depend polynomially on the state of the system, the argument above can be modified in a straightforward manner to show that each derivative term grows at most as $k^{c+1}$, where $c$ is the maximum polynomial order of the rates (w.r.t. the state variables). As a polynomial function is still integrable when multiplied by the Poisson density $p(k)$, by the same argument we obtain that the derivative of the density is still integrable over the space of trajectories, hence the first derivative of the satisfaction function exists.

We notice that repeated applications of the derivative operator still result in polynomial growth of the derivatives of the density with the number of transitions $k$. By the same argument, all the derivatives of finite order of the satisfaction function exist, hence $f_{\phi}\in\mathcal{C}^{\infty}(D)$.
QED.
\end{proof}

\section{Smoothed Model Checking}
In this section, we introduce in detail the smoothed model checking approach and the statistical tools we use. We will start first from an high level description of the method in the next section, filling in the statistical details in the following ones.

\subsection{A high level view}

Given an uncertain CTMC $\mathcal{M}_{{\theta}}$ depending on a vector of parameters ${\theta}\in\mathcal{D}$ and a MITL formula $\phi$, our goal is to find a \emph{statistical estimate} of the satisfaction probability of $\phi$ as a function of $\theta$, i.e. of the satisfaction function \[f_\phi(\theta) = {P}(\phi|\mathcal{M}_\theta),\qquad \theta\in\mathcal{D}.\] 
As in all statistical model checking algorithms, our statistical estimation will be based on monitoring the  satisfaction of the formula on sample trajectories of the system. However, as our system depends on a continuous vector of parameters, we will necessarily be able to \emph{noisily} observe the function $f_\phi$ only at a few input points $\theta_1,\ldots,\theta_k$, our {\it training set}. Given such information, our task is to construct a statistical model that, for {\it any} value $\theta^*\in\mathcal{D}$, will permit us to \emph{compute efficiently} an \emph{estimate of $f_\phi(\theta^*)$} and a \emph{confidence interval} for such a  prediction. 

To find an efficient and mathematically sound solution to this problem, we will adopt a Bayesian approach; the main ingredients are as follows: 
\begin{itemize}
\item we first choose a {\it prior distribution} over a suitable \emph{function space}; this distribution must be sufficiently expressive as to contain the satisfaction function in its support;
\item we determine the functional form of the {\it likelihood}, i.e. how the probability of the observed satisfaction values at individual parameters depends on the (unknown) true value of the satisfaction probability at that point;
\item leveraging Bayes's theorem, we then compute an approximation to posterior distribution over functions, given the observations. Evaluating the statistics of the induced posterior distribution on the function values at point $\theta^*$ we obtain the desired estimate and confidence interval. 
\end{itemize}

The technical difficulties of the procedure centre around the need to construct and manipulate efficiently distributions over functions. Here we rely on Gaussian Processes (GP), a flexible and computationally tractable non-parametric  class of distributions which provide an ideal prior distribution for our problem  \cite{Rasmussen2006}. The choice of the likelihood model is instead dictated by the nature of the problem: our observations are obtained by evaluating a property $\phi$ on a single trajectory generated from a stochastic model $\mathcal{M}_\theta$. Hence our actual observations are boolean, and the probability of the observations being 1 is a Bernoulli distribution with parameter $f_\phi(\theta)$ .  To improve estimation accuracy, we will generally draw $m\ge1$ observation for each $\theta$ in the training set,  so that observations are actually drawn from a Binomial random variable $Binomial (m,f_\phi(\theta))$. Notice that, in  statistical model checking, an approximation of  $f_\phi(\theta)$ would be computed directly from observations of such a binomial variable; the accuracy of the estimate would only be guaranteed in the limit of $m\rightarrow\infty$. In a GP context, there is no need to perform this intermediate estimation: we can \emph{directly use the binomial observation model} in Bayes's theorem. In this way, we are using   the \emph{exact} statistical model of the process, converging to the true function in the limit of a large number of observations. The strength of the approach, however, is in the fact that we generally obtain good approximations with few samples $m$ per each of few input points $\theta_1,\ldots,\theta_k$, which makes the method very efficient from a computational point of view.

The statistically minded reader may have noticed that, as we are observing a set of  true/ false labels per each input point, this problem looks similar to a classification problem (with the crucial differences that in a classification problem we observe only one single label per point and we do not have an exact statistical model of the observations). This similarity enables us to leverage the vast algorithmic repertoire developed for GP classificaton; in particular, our solution to the inference of satisfaction functions uses a modified version of the Expectation Propagation algorithm, which is the state-of-the-art in GP classification \cite{Rasmussen2006}. 

In the remainder of this section we will introduce in turn each element of the algorithm, and conclude by provide practical advice on its use and implementation. 
\subsection{Bayesian inference}
In this paper, we adopt a Bayesian machine learning approach. Bayesian methods offer substantial advantages, including a principled treatment of noise and a mathematically consistent way to quantify the resulting uncertainty in model-based estimates. Bayesian methods have already been employed in statistical model checking \cite{Zuliani2010}: there, the aim was to incorporate prior beliefs about truth probabilities, as well as to regularize to better handle rare events.

The fundamental insight of Bayesian statistics is that uncertainty quantification is a two step process: given an observable, uncertain quantity $\theta$, we must first construct a probability distribution $p(\theta)$, the {\it prior distribution}, which encapsulates our beliefs about $\theta$ prior to any observations been taken. The choice of the prior distribution is a fundamental modelling step, and affords considerable flexibility: prior distributions can range from uninformative priors, to priors encapsulating basic properties of the variables (e.g. positivity, continuity) to detailed mechanistic priors informed by hard scientific knowledge. The second component of a Bayesian model consists of the {\it likelihood function} or {\it noise model}, $p(\hat{\theta}\vert\theta)$, i.e. a probabilistic model of how the actual observations $\hat{\theta}$ depend on the value of the uncertain quantity. The likelihood effectively models the noise inherent in the observation process: while in some occasions these may be reasonably modelled from knowledge of the observation process, in other cases simple parametric choices such as Gaussian are made for computational simplicity.

Once both prior distribution and likelihood are defined, the rules of probability provide a way of computing the {\it posterior probability} through the celebrated {\it Bayes' Theorem}\begin{equation}
p(\theta\vert\hat{\theta})=\frac{p(\hat{\theta}\vert\theta)p(\theta)}{\sum_{\theta}p(\hat{\theta}\vert\theta)p(\theta)}.\label{BayesThm}
\end{equation}
The posterior probability precisely quantifies the uncertainty in our quantity of interest $\theta$ resulting from both our prior beliefs and our noisy observations of its value. Unfortunately, computing the normalisation constant in equation \eqref{BayesThm} is often computationally impossible in all but the simplest cases, and much research in Machine Learning and Computational Statistics is devoted to efficient algorithms to accurately approximate posterior distributions.

Over the rest of this section, we shall discuss how the Bayesian framework can be used effectively to devise a statistical model checking algorithm for uncertain CTMCs. We separately introduce the family of prior models we employ, how this can be combined with observations of the satisfaction of formulae over individual runs of the model, and finally how an accurate and efficient approximation of the posterior distribution can be computed.
\subsection{Prior modelling - Gaussian processes}

In this paper we are interested in estimating the satisfaction function of a formula from instances of its satisfaction on individual runs at discrete parameter values. Our theoretical analysis in Theorem \ref{smoothnessProof} enabled us to conclude that the satisfaction function is a smooth function of its arguments, the model parameters: a natural choice of prior distribution over smooth functions is a {\it Gaussian Process} (GP \cite{Rasmussen2006}). Intuitively, one can realise a random function as a linear combination of basis functions with random coefficients. If we choose the coefficients (weights) to be sampled from a normal distribution, the resulting random functions are draws from a GP. Formally, the definition of a GP is as follows:
\begin{definition} A GP is a collection of random variables indexed by an input variable $x$ such that every finite dimensional marginal distribution is a multivariate normal distribution. \label{GPdef}
\end{definition}
In practice, a sample from a GP is a random function; the random vector obtained by evaluating a sample function at a finite set of points $x_1,\ldots,x_N$ is a multivariate Gaussian random variable. A GP is uniquely defined by its {\it mean} and {\it covariance} functions, denoted by $\mu(x)$ and $k(x,x')$; the mean vector (covariance matrix) of the finite dimensional marginals are given by evaluating the mean (covariance) function on every (pair of) point in the finite sample. Naturally, by subtracting the mean function to any sample function, we can always reduce ourselves to the case of {\it zero mean} GPs; in the following, we will adopt this convention and ignore the mean function.

How is the GP covariance related to the basis function description? Consider a random function $f(x)=\sum_{i\in\mathcal{I}} w_i\xi_i(x)$ where $\mathcal{I}\subset\mathbb{N}$ is a set of indices and $\xi_i(x)$ are the basis functions. By definition, \begin{equation}
k(x,x')=\langle f(x)f(x')\rangle=\sum_{i,j}\xi_i(x)\xi_j(x')\langle w_iw_j\rangle\label{GPcovBF}\end{equation}
so that the covariance function is determined by the covariance of the weights and by products of basis functions evaluated at the two points. 
In practice, it is more convenient to specify a GP by directly specifying its covariance function (the so-called function space view), rather than using the explicit basis function construction; nevertheless, basis functions are still useful in order to prove properties of GPs. 

A popular choice for the covariance function, which we will also use, is the {\it squared exponential} covariance function\[
k(x,x')=\sigma^2\exp\left[-\frac{(x-x')^2}{\lambda^2}\right]\]
with two hyper-parameters: the amplitude $\sigma^2$ and the characteristic length scale $\lambda^2$. It is easy to show that this covariance function corresponds to selecting as basis functions a set of Gaussian shaped curves of standard deviation $\lambda$ and centred at all points in the input space \cite{Rasmussen2006}.

The kernel or covariance function endows the space of samples from a GP with a metric: this is an example of a Reproducing Kernel Hilbert Space (RKHS). A complete characterisation of such spaces is non-trivial; for our purposes, however, it is sufficient to show that their expressivity is sufficient to approximate a satisfaction function by a sample from a GP. We restrict to the squared exponential covariance, although our result holds for the more general class of universal covariance functions (for a precise definition of universality see \cite{steinwart_influence_2002}). We then have the following result   


\begin{theorem}
\label{denseness}
Let $f$ be a continuous function over a compact domain $D\in\mathbb{R}^p$. For every $\epsilon>0$, there exists a sample $\psi$ from a GP with squared exponential covariance such that\[
\Vert f-\psi\Vert_2\le\epsilon\]
where $\Vert\cdot\Vert_2$ denotes the $L_2$  norm.
\end{theorem}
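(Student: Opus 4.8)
The statement to be proved is a denseness result, but a subtle one: it concerns the \emph{samples} (realisations) of the GP, not the functions in its RKHS. My plan is therefore to reformulate it measure-theoretically. I would view the GP as a probability measure $\mu$ on the function space, and observe that ``there exists a sample $\psi$ with $\Vert f-\psi\Vert_2\le\epsilon$'' follows immediately once one shows that every $L_2$-ball $B(f,\epsilon)=\{\psi:\Vert f-\psi\Vert_2<\epsilon\}$ receives positive probability, $\mu(B(f,\epsilon))>0$, since a positive-probability event is necessarily realised by some outcome. Equivalently, I would show that $f$ lies in the topological support of $\mu$. The first concrete step is to realise the GP as a genuine centred Gaussian (Radon) measure on the separable Hilbert space $L_2(D)$: this is legitimate because the squared exponential kernel has constant diagonal $k(x,x)=\sigma^2$ and $D$ is compact, so the covariance operator has finite trace $\int_D k(x,x)\,dx=\sigma^2\vert D\vert<\infty$ and is thus trace-class.

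Second, I would bring in the two structural facts about Gaussian measures that turn denseness of the RKHS into denseness of samples --- and this bridge is exactly where the real content, and the main obstacle, lies. Naively one might hope to finish by invoking universality of the squared exponential kernel, which says that its RKHS $\mathcal{H}_k$ is dense in $C(D)$ under $\Vert\cdot\Vert_\infty$ (Steinwart, \cite{steinwart_influence_2002}). But this is \emph{not} sufficient on its own, because GP samples lie almost surely \emph{outside} $\mathcal{H}_k$, so proximity of $\mathcal{H}_k$ to $f$ does not by itself guarantee a nearby sample. The correct bridge is the classical support theorem for Gaussian measures (see e.g.\ Bogachev's monograph on Gaussian measures): the topological support of a centred Gaussian measure equals the closure of its Cameron--Martin space in the ambient norm. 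In our setting the Cameron--Martin space of $\mu$ coincides with the RKHS $\mathcal{H}_k$, so $\mathrm{supp}(\mu)=\overline{\mathcal{H}_k}^{\,L_2}$. Concretely this can be established from two ingredients: positivity of small-ball probabilities, $\mu(B(0,\delta))>0$ for all $\delta>0$ (a standard fact for centred Gaussian measures), and the Cameron--Martin theorem, which asserts that translating $\mu$ by any $h\in\mathcal{H}_k$ produces an equivalent measure and hence $\mu(B(h,\delta))>0$ for every such $h$ and every $\delta>0$.

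Finally, I would assemble the pieces. Because $D$ is compact, $\Vert g\Vert_2\le\vert D\vert^{1/2}\Vert g\Vert_\infty$, so the sup-norm universality gives, for the continuous function $f$, an element $h\in\mathcal{H}_k$ with $\Vert f-h\Vert_2<\epsilon/2$. Since translating by $h\in\mathcal{H}_k$ keeps balls of positive measure, $\mu\bigl(B(h,\epsilon/2)\bigr)>0$; and as $B(h,\epsilon/2)\subseteq B(f,\epsilon)$ by the triangle inequality, we get $\mu\bigl(B(f,\epsilon)\bigr)>0$. Therefore a sample $\psi$ with $\Vert f-\psi\Vert_2<\epsilon$ exists, which proves the theorem. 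The only genuinely delicate point, and the one I would spend most care on, is the bridging step of the second paragraph --- keeping clearly separated the denseness of the RKHS (a statement about the Cameron--Martin space) from the denseness of the samples (a statement about the support of the measure), and invoking the Gaussian support theorem to connect the two.
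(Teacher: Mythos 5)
Your proof is correct, and it is in fact more complete than the one in the paper. The paper's argument is a one-line appeal to the universality of the squared exponential kernel, with a footnote sketching a constructive alternative (polynomial approximation plus boundedness of the resulting weight vector). You share that starting point --- universality gives density of the RKHS $\mathcal{H}_k$ in $C(D)$ under the sup norm, hence in $L_2(D)$ since $D$ is compact --- but you correctly identify that this alone does not prove the theorem as stated: the claim concerns \emph{samples} of the GP, and samples lie almost surely outside $\mathcal{H}_k$, so density of the RKHS does not by itself produce a nearby sample path. The bridge you supply --- realising the GP as a centred Gaussian measure $\mu$ on $L_2(D)$ (legitimate since the covariance operator is trace-class, $\int_D k(x,x)\,dx=\sigma^2|D|<\infty$), identifying its Cameron--Martin space with $\mathcal{H}_k$, and invoking the Gaussian support theorem (equivalently, small-ball positivity plus the Cameron--Martin translation theorem) to conclude $\mu(B(f,\epsilon))>0$ and hence the existence of a sample in $B(f,\epsilon)$ --- is exactly the missing step, and it is the standard rigorous route to sample-path density results of this kind. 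The paper's footnote gestures at the same idea when it says the bounded weight vector has ``finite probability under a GP,'' but as written that is imprecise: any single weight vector has probability zero under a nondegenerate Gaussian, and what actually has positive probability is a neighbourhood of it, which is precisely the Cameron--Martin statement you make explicit. In short, you use the same key ingredient (universality of the squared exponential kernel) but your measure-theoretic completion of the argument is a genuine and necessary strengthening of the paper's proof.
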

\begin{proof}
This follows directly from the universal property of the square exponential kernel, as discussed in \cite{steinwart_influence_2002}
\footnote{The proof of the universality of the square exponential kernel is nontrivial and technical. A simpler, constructive proof of this theorem can be obtained by observing that any smooth function can be approximated arbitrarily well with a polynomial, and then use the fact that the integral operator with squared-exponential kernel has bounded inverse on polynomials. Therefore, the weight vector must be bounded, and hence have finite probability under a GP.
}.
 QED.

%
%
%
\end{proof}

\subsection{GP posterior prediction}
The results of Theorems \ref{smoothness} and \ref{denseness} jointly imply that the satisfaction function of a formula can be approximated arbitrarily well by a sample from a GP, justifying the use of GPs as priors for the satisfaction function.
To see how this fact enables a Bayesian statistical model checking approach {\it directly at the level of the satisfaction function}, we need to explain the basics of posterior computation in GP models.  Let $x$ denote the input value and let $\mathbf{\hat{f}}=\{\hat{f}_1,\ldots,\hat{f}_N\}$ denote observations of the values of the unknown function $f$ at input points $x_1,\ldots,x_N$. We are interested in computing the distribution over $f$ at a {\it new} input point $x^*$ {\it given} the observed values $\mathbf{\hat{f}}$, $p(f(x^*\vert\mathbf{\hat{f}})$. A priori, we know that the true function values at any finite collection of input points is Gaussian distributed, hence\[
p\left(f(x^*),f(x_1),\ldots,f(x_N)\right)=\mathcal{N}(\boldsymbol{\mu},\Sigma)\]
with $\boldsymbol{\mu}$ and $\Sigma$ obtained from the mean and covariance function as explained in the previous subsection. This prior distribution can be combined with likelihood models for the observations, $p(\hat{f}\vert f)$ in a Bayesian fashion to yield a joint posterior\[
p\left(f(x^*),f(x_1),\ldots,f(x_N)\vert\mathbf{\hat{f}}\right)=\frac{1}{Z}p\left(f(x^*),f(x_1),\ldots,f(x_N)\right)\prod_ip(\hat{f}_i\vert f(x_i))\]
where $Z$ is a normalisation constant. The desired posterior predictive distribution can then be obtained by integrating out ({\it marginalising}) the true function values $f(x_1),\ldots,f(x_N)$\begin{equation}
p(f(x^*\vert\mathbf{\hat{f}})=\int \prod_{i=1}^Ndf(x_i)p\left(f(x^*),f(x_1),\ldots,f(x_N)\vert\mathbf{\hat{f}}\right).\label{predPost}\end{equation}
Equation \eqref{predPost} plays a central role in non-parametric function estimation; the inference procedure outlined above goes under the name of {\it GP regression}. It is important to note that, in the case of Gaussian observation noise, the integral in equation \eqref{predPost} can be computed in closed form. Further details are given e.g. in \cite{Rasmussen2006}.


\paragraph{{\bf Important remark}: GP regression provides an analytical expression for the predicted mean and variance of the unknown function at all input points.}
\subsection{Likelihood model}
In our case, observations of the satisfaction function are made through boolean evaluations of a formula over individual trajectories at isolated parameter values. Therefore, the Gaussian observation noise cannot be applied directly in this case, meaning that a closed form solution to the inference problem cannot be found\footnote{In principle, one could appeal to the Central Limit Theorem and approximate a Gaussian observation model by using as observations averages of formula evaluations over numerous trajectories for each values of the parameters. Such a procedure however would incur significant computational overheads.}. However, as we shall see, a closed form, accurate approximation of the posterior can be found. The satisfaction of a formula $\phi$ over a trajectory generated from a specific parameter value $\theta$ is a Bernoulli random variable with success probability $f_{\phi}(\theta)$. In order to map this probability to the real numbers, we introduce the {\it inverse probit} transformation\[
\psi(f)=g\Leftrightarrow f=\int_{\-\infty}^g\mathcal{N}(0,1)\quad \forall f\in[0,1], g\in\mathbb{R}\]
where $\mathcal{N}(0,1)$ is the standard Gaussian distribution with mean zero and variance 1. The function $g_{\phi}(\theta)=\psi(f_{\phi}(\theta)$ is by construction a smooth, real valued function of the model parameters, and can therefore be modelled as a draw from a GP. 

We can summarise the inference problem as follows: our data would consist of $D$ binary evaluations of satisfaction at each of $P$ parameter values. At each parameter value, binary evaluatsion represent independent draws from the same Bernoulli distribution with success probability $f_{\phi}(\theta)$. The (inverse probit) transform of the success probability is a smooth function of the parameters and is assigned a GP prior. The overall joint probability of the observations $\mathcal{O}$ and of the satisfaction function $f_{\phi}(\theta)$ would be given by \begin{equation}
p\left(\mathcal{O},f_{\phi}(\theta)\right)=GP\left(\psi(f_{\phi}(\theta))\right)\prod_{i=1}^D\prod_{j=1}^P\mathrm{Bernoulli}\left(O_{i,j}\vert f_{\phi}(\theta_j)\right)\label{joint}
\end{equation}
Computing the posterior distribution the satisfaction probability at a new parameter value $f_{\phi}(\theta^*)$ solves the statistical model checking problem.

Before discussing how the posterior computation problem can be solved, we make the following observations:
\begin{enumerate}
\item{The procedure described above is an {\it exact} probabilistic model, in the sense that both the Bernoulli and the GP modelling step use provable properties of the unknown function (by virtue of Theorem \ref{smoothnessProof}. No further approximations are introduced.}
\item{Computing the posterior distribution introduces an approximation, however it provides an {\it analytical estimate} (with uncertainty) of the satisfaction probability at all parameter values (not only the ones explored through simulation)}
\item{The posterior inference problem is akin to a {\it classification} problem, where we seek to assign the probability that a binary output will be 1 to a point in input space. The difference from a standard classification problem is that in this case we admit the possibility of having multiple labels observed for the same input point. This observation enables us to effectively exploit the vast amount of research on GP classification over the last decade to devise an efficient and accurate algorithms for statistical model checking.}
\end{enumerate} 
\subsection{Approximate posterior computation}
GP classification has been intensely studied over the last fifteen years in machine learning; a key difference from the regression case is that exact computation of the posterior probability is not possible. However, several highly accurate approximate schemes have been proposed over the last decade in the machine learning literature: here we use the Expectation Propagation (EP) approach \cite{Opper:Gaussian00,Minka:expectation01}, which has been shown to combine high accuracy with strong computational efficiency. Importantly, this approach still provides an analytic approximation to the whole satisfaction function in terms of basis functions.

We highlight here the basic ideas behind the EP algorithm; for a more detailed discussion see e.g. \cite{Rasmussen2006}. The EP algorithm has its origins in the statistical physics of disordered systems \cite{Opper:Gaussian00}; it is a general algorithm that computes a Gaussian approximation to probabilistic models of the form\begin{equation}
p(\mathbf{x}\vert\mathbf{y})=p_0(\mathbf{x})\prod_i t_i(y_i,x_i)\label{latentGaussian}
\end{equation}
where $p_0(\mathbf{x})$ is a multivariate Gaussian distribution coupling all $x_i$ variables (so called {\it site variables}) and $t_i$ can be general univariate distributions. Models of this form are frequently encountered in machine learning and are termed {\it latent Gaussian models}: the common setup (which is indeed our setting in equation \eqref{joint}) is that the $p_0$ term represents a prior distribution, with the $t_i$ terms representing non-Gaussian observation likelihoods. The EP approximation inherits the same structure of the original model in equation \eqref{latentGaussian}, with the likelihood terms replaced by univariate Gaussian terms\begin{equation}
q(\mathbf{x}\vert\mathbf{y})=p_0(\mathbf{x})\prod_i \tilde{t}_i(y_i,x_i).\label{EPapprox}
\end{equation}
The algorithm then proceeds iteratively as follows:\begin{enumerate}
\item{Given an estimate of the joint EP approximation in equation \eqref{EPapprox}, chose a site index $i$, remove the approximate likelihood term corresponding to site $i$, and marginalise all other variables. The resulting distribution\[
q_c(x_i)=\int \prod_{i\neq i}dx_jq(\mathbf{x}\vert\mathbf{y})\tilde{t}^{-1}(y_i,x_i)\]
is called the {\it cavity distribution}. This term, inherited from statistical physics, is intuitively interpreted as the influence site $i$ would feel from the other sites {\it if it was not there}.}
\item{The exact likelihood term corresponding to site $i$ is reintroduced forming the so-called {\it tilted distribution}, $q_c(x_i)t_i(x_i,y_i)$ (which is in general not normalised).}
\item{The EP approximation is updated by replacing the initial approximate likelihood term for site $i$ with a new term $\tilde{t}_i^{new}(x_i,y_i)$ obtained by finding the univariate Gaussian which matches the moments of the tilted distribution}
\item{This update procedure is applied iteratively to all sites until the moment of the EP approximation do not change.}
\end{enumerate}

A formal justification of the EP algorithm is non-trivial: \cite{Heskes2005} showed that the EP free energy does indeed minimise a Gibbs free energy associated with the statistical model, however their argument is non-trivial and out of the scope of this report. Importantly, extensive empirical studies have confirmed the excellent accuracy of the EP algorithm for posterior computation in GP classification models \cite{Kuss:assessing05}, so that EP is effectively the state-of-the-art approximate inference algorithm in this field.
\subsection{Practical considerations}
Smoothed model checking relies on the density of the RKHS within the space of smooth functions; this restricts somewhat the choice of covariance function that can be used to model the satisfaction function. Nevertheless, several possible choices still remain besides the squared exponential kernel we use; we refer the interested reader to \cite{Rasmussen2006}. Each of these covariance functions is equipped with some hyperparameters; while the value of the hyperparameters does not affect the theoretical guarantees (i.e. the RKHS is dense in the smooth functions regardless of the kernel hyperparameters), their value may have practical repercussions on the quality of the reconstruction from a finite sample. Automatic estimators for hyperparameters can be obtained by type-II maximum likelihood methods \cite{Rasmussen2006} and these may be very useful when the dimensionality of the space is high, so that empirical, nonparametric estimates are difficult to obtain.

An important issue is the computational complexity of Smoothed Model Checking. As we will see in Section \ref{sec:experiments}, leveraging the smoothness of the satisfaction function can dramatically reduce the number of simulations required for estimation; this however comes at a cost. GP regression involves the inversion of the covariance matrix estimated at all pairs of parameter values; if we have $n$ points in our grid of parameter values, this comes at a complexity $O(n^3)$. This complexity can be alleviated by using sparse approximations (again intensely studied in the last ten years, see \cite{Rasmussen2006} for a review). Further savings may be obtained by an intelligent choice of grid points, using strategies such as Latin Hypercube Sampling (see e.g. \cite{Bortolussi2013qest}). 

Our theoretical results Theorems 1 and 2 ensure that the statistical model we employ for the satisfaction function is correct; therefore, powerful arguments from asymptotic statistics guarantee that, when the number of samples increases, the estimated satisfaction function (GP posterior mean) will approach the true satisfaction function. Practical guidance as to exactly how many samples are necessary for a certain accuracy is harder to come by; our advice would be to monitor the predictive uncertainty which is also returned by the GP. The predictive uncertainty also diminishes as the number of samples increases, and monitoring of this quantity would enable the user to obtain the desired level of precision.

\subsection{Implementation}

A prototype Java implementation is available at \url{http://homepages.inf.ed.ac.uk/dmilios/smoothedMC}.
Our tool allows the user to perform smoothed model checking on one or more parameters via a command-line interface.
The program has to be provided with two input files: the first one contains the model specified in the Bio-PEPA language \cite{Ciocchetta2009}; the second input file contains the MiTL property to be examined.
The user has also to specify the names of the parameters to be investigated and the corresponding ranges/granularity.
The output is in the form of two \lq\lq .csv\rq\rq\ files that contain a grid of parameter values and their corresponding predictions of the (expected) satisfaction probability, along with their 95\% confidence bounds.

\section{Experiments}
\label{sec:experiments}

In this section, we present a series of experiments that demonstrate the potential of Smoothed Model Checking.

\subsection{Poisson process}
As a simple benchmark for the approach, we chose to investigate the satisfaction of the simple formula $\always{0}{1}(N<4)$ on a Poisson process $N$ with uncertain rate. While this scenario is very simple, the Poisson process remains a fundamental building block of many more complex models (such as queueing networks), and it does have the advantage that an analytic expression can be simply derived for the satisfaction function as a function of the uncertain parameter. The arrival rate of the Poisson process was allowed to vary over an order of magnitude from 0.5 to 5; within this range the satisfaction function decreases monotonically from practically 1 to about 0.3.

To estimate the satisfaction function, we selected a grid of 46 parameter values at a distance of 0.1 from each other and used a GP prior with squared exponential covariance with amplitude 1 and characteristic length scale $1$. GP hyper-parameters were not optimised in this case. Figure \ref{PoissonResults} shows the results of using the hierarchical GP model with a \emph{single} observation at each parameter value (left) and with \emph{five} observations per each parameter value (right). As is evident, both cases capture the basic shape of the satisfaction function, but the use of 5 replicate observations per input values considerably improves the situation. In order to quantify the quality and reliability of the estimation, we carried out a more thorough study, simulating sixty independent data sets with a single observation, 5 and 10 observations per input value (twenty each). Summary statistics for this larger experiment are reported in Table \ref{table:PoissonBig}. We quantified the accuracy of our estimation using the mean squared error (MSE) of our estimation across all input points (sum of the squared residuals): each row reports the minimum, mean, and maximum MSE across the twenty runs. To quantify the accuracy of the uncertainty quantification, we report the average fraction of true function values that fall outside the 95\% confidence intervals of the GP. Finally, in order to assess whether leveraging the smoothness of the satisfaction function brought concrete computational benefits, we also indicate the number of standard SMC simulations that would be needed at each input value to yield an expected MSE equal to the one we empirically find with our method (as the true satisfaction function is known, it is easy to calculate the expected SMC MSE as the sum of the theoretical variances at each point). As can be seen, the Smoothed Model Checking estimate is extremely accurate already with as few as 5 observations per parameter value. Importantly, our uncertainty quantification appears to be well calibrated: as the number of observations per parameter value increases, the fraction of true function values which lies outside the 95th percentile approaches 5 \%, as expected. Finally, we notice that smoothed MC can entail very considerable computational savings over the simpler approach of running SMC independently at each parameter value. Naturally, our approach also yields a powerful tool to predict  and quantify uncertainty on the satisfaction values at all parameter values; in order to do this from standard SMC estimation, one would need to further interpolate the function values obtained (e.g. using GP regression).

\begin{figure}[!t]
\begin{center}
\includegraphics[width=0.45\textwidth]{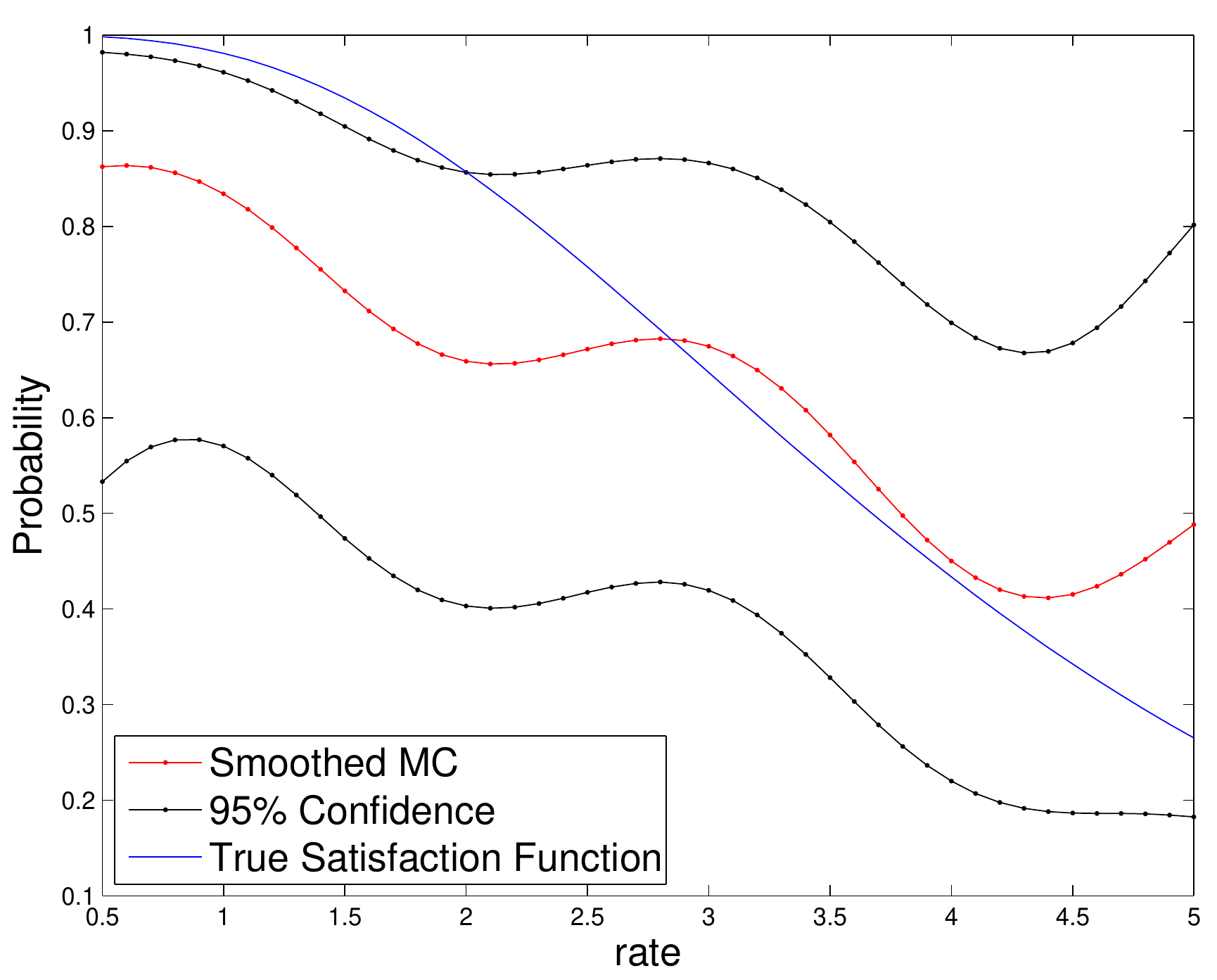}
\includegraphics[width=0.45\textwidth]{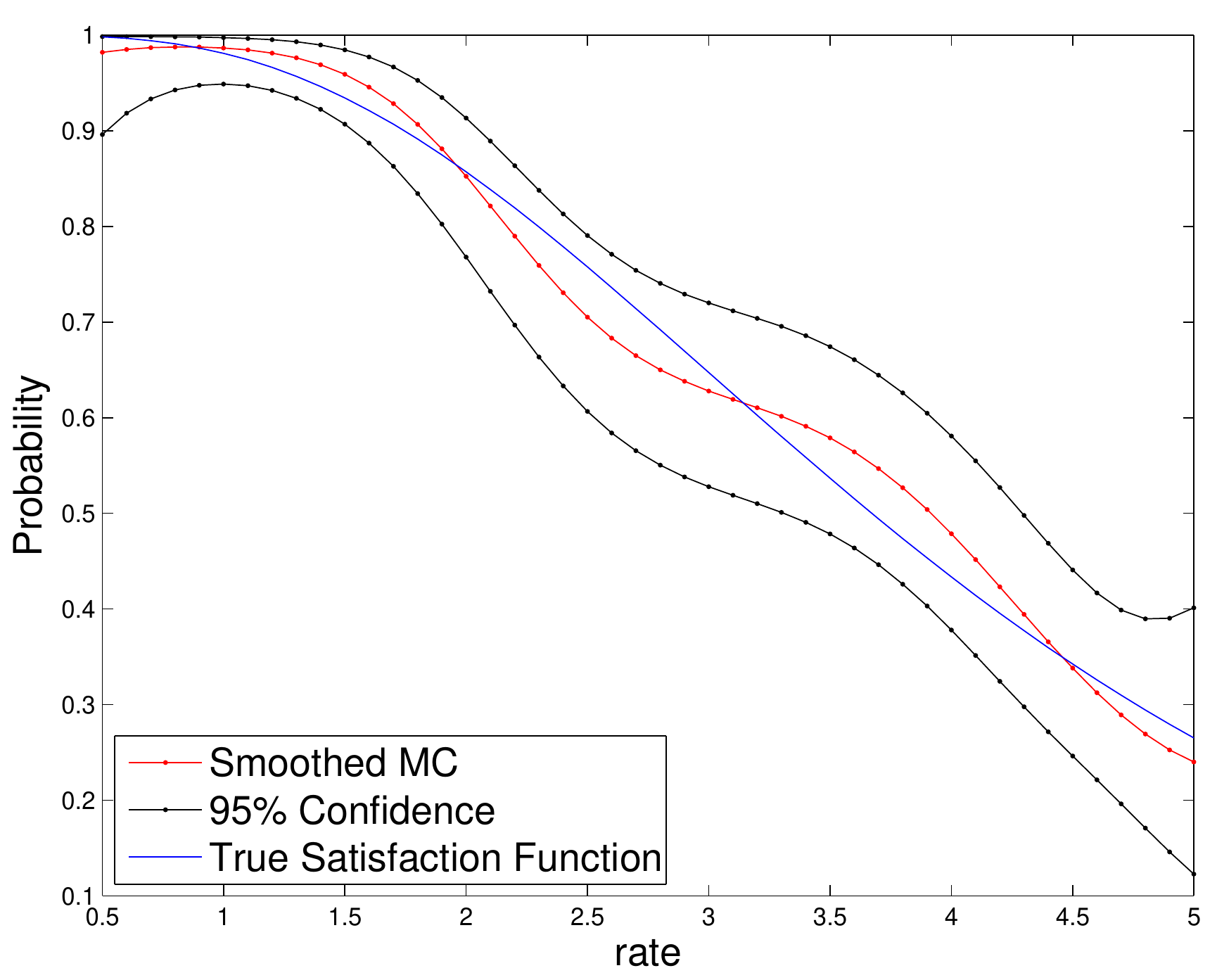}
\end{center}
\caption{Estimated satisfaction function for Poisson process from a single sample per input value (left) and five samples per value (right). Blue: true satisfaction function; red: mean estimate; black: estimated 95\% confidence intervals.}\label{PoissonResults}
\end{figure} 

\begin{table}[!t]
\[
\begin{array}{|c|c|c|c|}
\hline
\text{Obs. per value}  & \text{residuals} & \text{mean fraction outside 95 percentile} & \text{SMC}\\
\hline
1 & 0.0014\colon 0.011 \colon 0.034 & 11\% & 4\colon 11\colon 87\\
\hline
5 & 0.0011\colon 0.0064 \colon 0.014 & 6.3\% & 9\colon 20\colon 109\\
\hline
10 & 0.0009\colon 0.0037 \colon 0.0086 & 4.8\% & 15\colon 34\colon 132\\
\hline

\end{array} 
\]
\caption{Poisson process results: statistics over twenty independent runs with 1, 5 and 10 truth observations per parameter values. The second column reports the MSE in the reconstruction of the satisfaction function (min:mean:max across the twenty runs). The third column reports the average fraction of predictions that fall outside the GP's 95\% predicted uncertainty. The fourth column indicates the (theoretical) number of SMC runs per point which would be needed to match the max:mean:min empirical Smoothed model checking results.}
\label{table:PoissonBig}
\end{table}

The results on this simple example show that smoothed model checking can be a very effective tool for estimating satisfaction functions, and that the GP classification framework can yield very considerable computational savings over SMC estimation at isolated parameter values.

\subsection{Network epidemics}\label{SIR}
We consider now a more structured example of the spread of an epidemics in a population of fixed size. We will consider  the classical SIR infection model \cite{Andersson2000}, in which an initial population of susceptible nodes can be infected after contact with an infected individual. Infected nodes can recover after some time, and become immune from the infection; such models play an important role not only in epidemiology, but also in computer science, for instance  as models of the spread of software worms. Here we consider the case of permanent immunisation.

This system is modelled as a population CTMC, in which the state space is described by a vector $\vec{X}$ of three variables, counting how many nodes are in the susceptible ($X_S$), infected ($X_I$), and recovered ($X_R$). The dynamics of the CTMC are described by a list of transitions rules, or reactions, together with their rate functions. We represent them in the biochemical notation style (see e.g. \cite{Gillespie1977}). All rates of this model follow the law of mass action.  
\begin{description}
\item[Infection:]  $S+I \xrightarrow{k_i} I+I$,  with rate function  $k_i X_S X_I$;
\item[Recovery:] $I\xrightarrow{k_r} R$,  with rate function  $k_r X_I$;
\end{description}

Since immunisation is permanent, the epidemics extinguishes after a finite amount of time with probability one. However, the time of extinction depends on the parameters of the process in a non-trivial way.  As for the transient dynamics before extinction, there are two possible behaviours depending on the \emph{basic reproductive number} $R_0 = \frac{k_i}{k_r}$. If $R_0 < 1$, the epidemics extinguishes very quickly, while if $R_0>1$, there is an outbreak and a large fraction of the population can get infected. In this example, we consider the following MiTL property, which states that the extinction happens between time 100 and 120:
\begin{equation}
\label{eq:ext}
\phi =  ( \eventually{100}{120} X_I =  0 ) \wedge ( \always{0}{100} X_I > 0 ). 
\end{equation}

\begin{figure}
\begin{center}
\includegraphics[width=.48\textwidth]{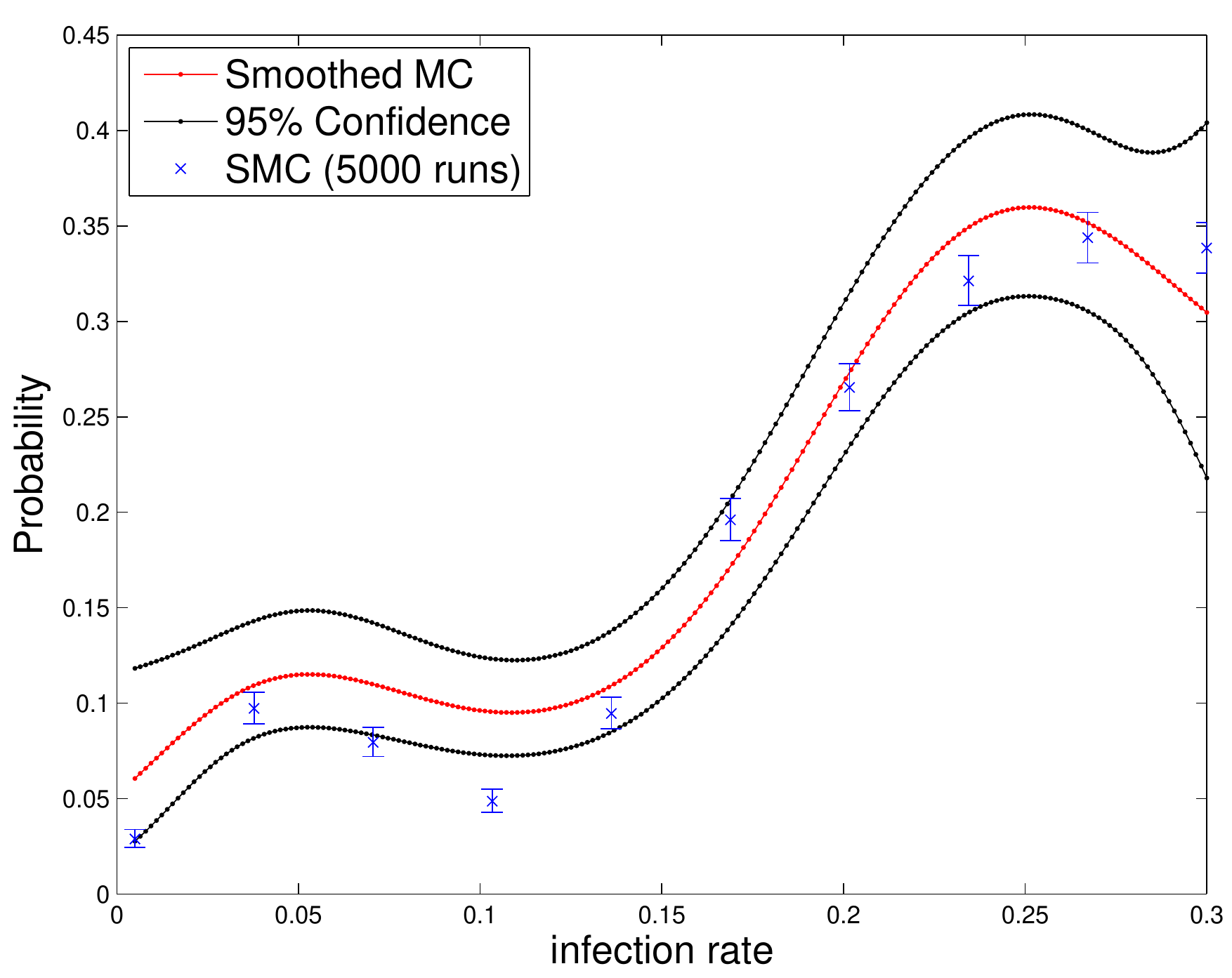} \includegraphics[width=.48\textwidth]{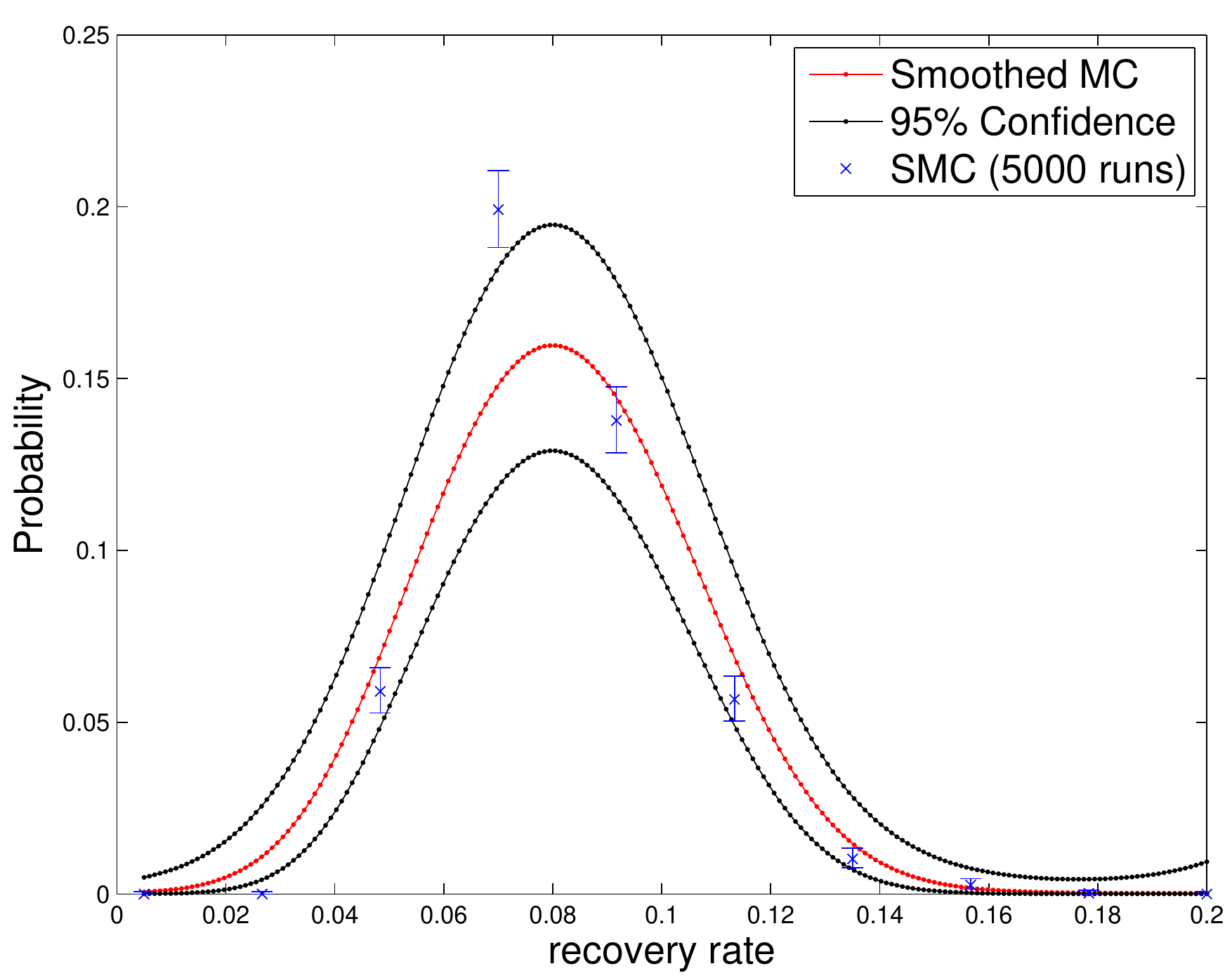}
\end{center}

\caption{Smoothed model checking reconstruction of the satisfaction probability of the extinction formula (\ref{eq:ext}), as a function of $k_i\in [0.005,0.3]$ (left) and of $k_r\in[0.005,0.2]$ (right). The red line is the reconstructed probability and the black curves are the 95\% confidence bounds. The blue crosses, instead, are the values of the satisfaction probability estimated with standard statistical model checking routines with 5000 runs. Confidence bounds are reported also for such an estimate. }
\label{fig:SIRsingle}
\end{figure}

We first used smoothed model checking to explore each of the two parameters individually. We varied the infection rate $k_i$ in the interval $k_i\in [0.005,0.3]$, holding $k_r$ fixed to $k_r = 0.05$, while $k_r$ has been varied within $[0.005,0.2]$ with $k_i=0.12$. 
In each case, we sampled 10 runs from a grid of 200 points evenly distributed in the parameter domain, for a total of 2000 runs,  and applied smoothed model checking to obtain a prediction of the satisfaction probability of $\phi$ as a function of the parameter. 
Results are shown in Figure \ref{fig:SIRsingle}, where we also plot the 95\% confidence bounds of the estimate. The predicted satisfaction function is compared with  point-wise estimates of the satisfaction probability using standard statistical model checking with 5000 runs per point.  As we can see, our method provides an accurate reconstruction of the probability function, at a very cheap computational cost, which is dominated by simulation.

As a final experiment, we considered the estimation of the satisfaction probability of $\phi$ as a function of both parameters. Results are shown in Figure \ref{fig:SIRdouble}, which compares results of Smoothed Model Checking with traditional Statistical Model Checking: the left hand plot shows the GP posterior mean having observed 10 independent runs per parameter value, while the right hand panel shows a surface plot of estimates of the probability function at the same points obtained with SMC from 5000 simulations per point (deep SMC). The remarkable similarity between the two surfaces attests to the good quality of the Smoothed Model Checking estimation; nevertheless, some of the limitations of the approach are also evident. In particular, the satisfaction function changes very steeply as the recovery rate increases for large values of the infection rate (front right in the left plot); this feature cannot be captured by the homogeneous assumption underlying the GP model, and in that region Smoothed model checking underestimates the satisfaction function.

To assess quantitatively the accuracy of the reconstruction, we repeated the simulation for ten different batches of data (keeping both the grid and the hyperparameters of the GP fixed). On average, the Smoothed model checking estimate (posterior GP mean) differed from the deep SMC estimation by 3.7\%, with the deep SMC estimate exceeding the 95\% confidence interval in 4.7\% (again, a remarkably close number to the expected 5\%). The whole Smoothed Model Checking procedure (generating the observation plus GP inference) took on average 1.5 seconds per run on a standard laptop; by contrast, the deep SMC estimates used for the right panel of Figure \ref{fig:SIRdouble} took approximately 13 minutes.

\begin{figure}
\begin{center}
\includegraphics[width=.48\textwidth]{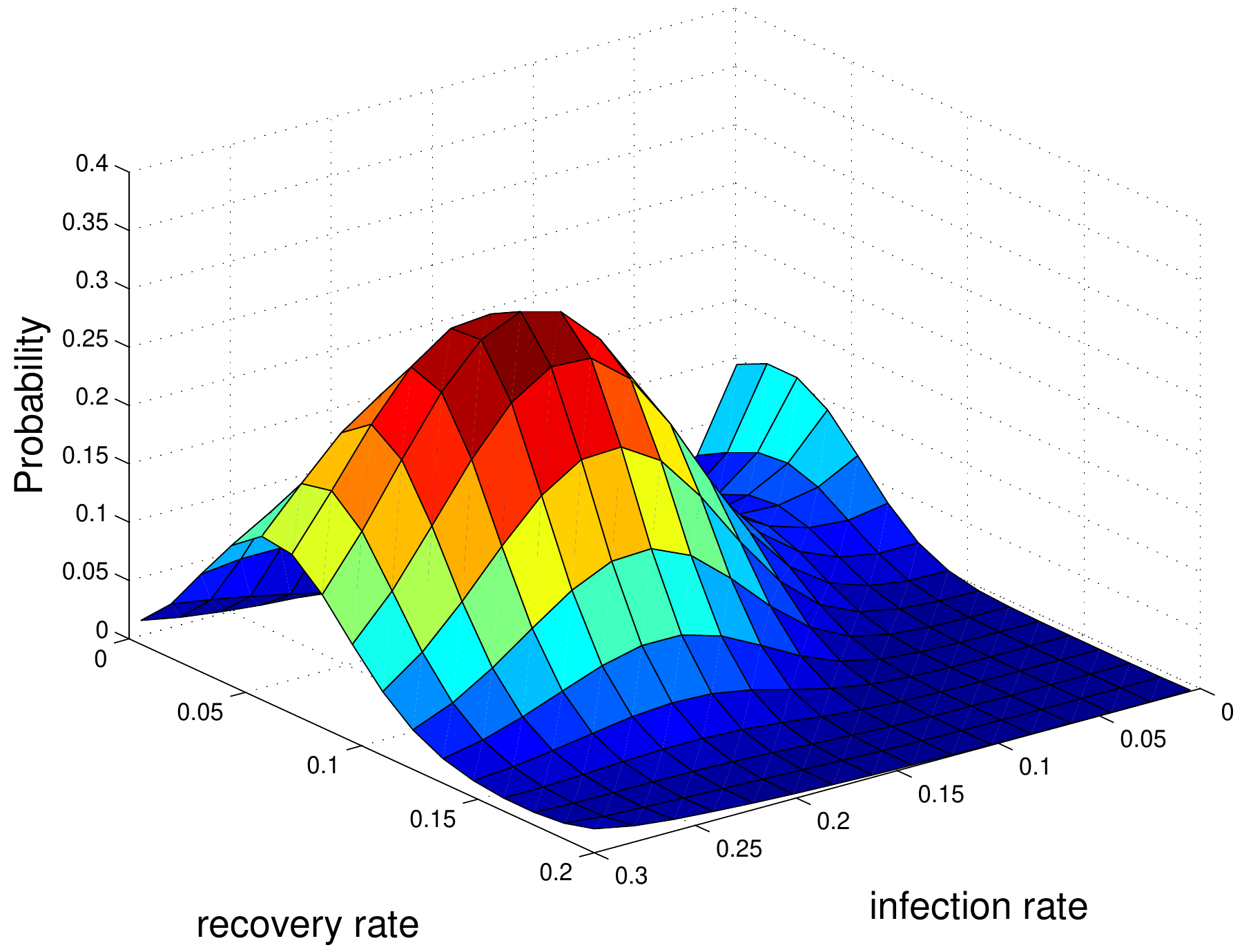}
\includegraphics[width=.48\textwidth]{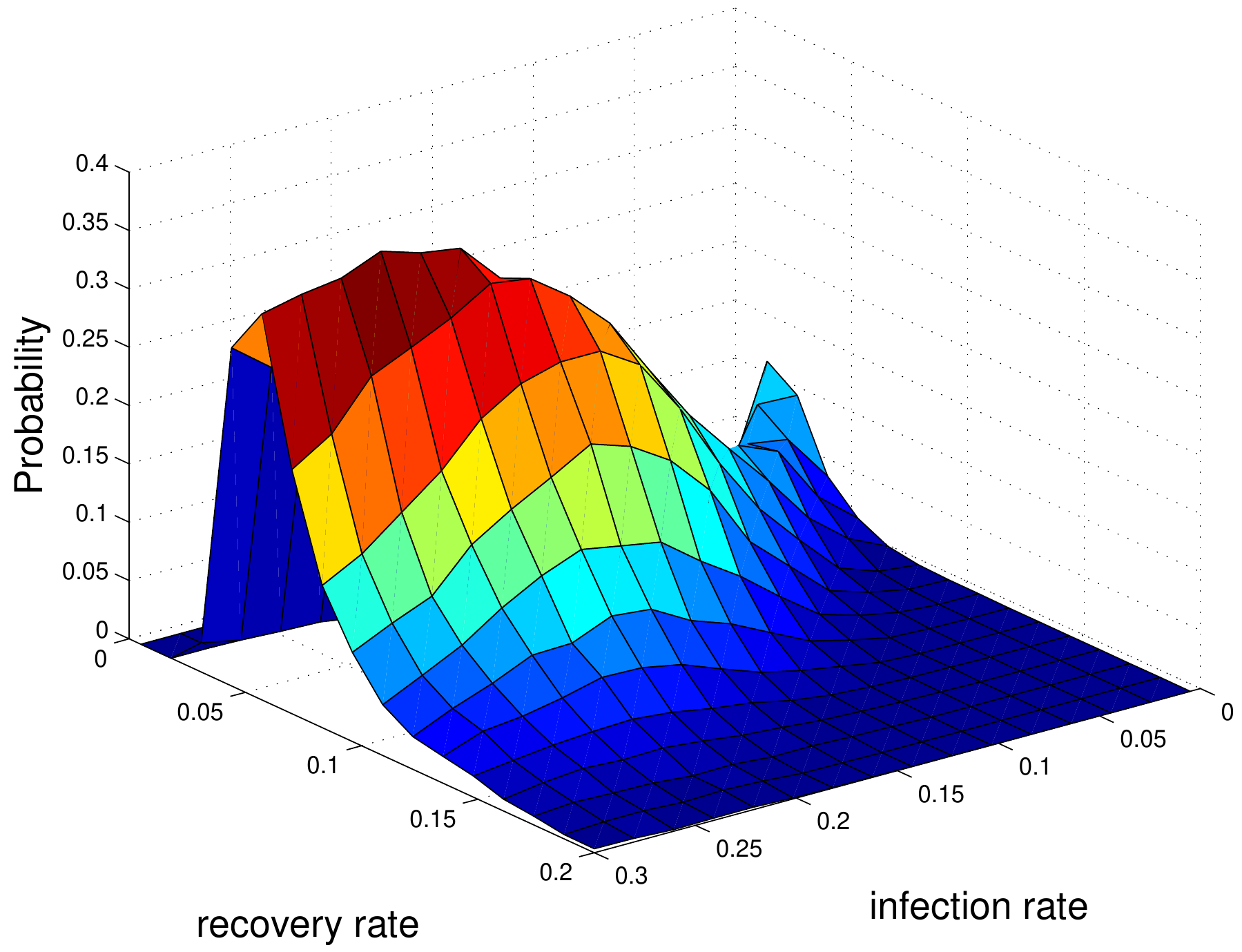}
\end{center}

\caption{SIR model: {\it left}, smoothed model checking reconstruction of the satisfaction probability of the extinction formula (\ref{eq:ext}), as a function of both $k_i\in [0.005,0.3]$  and  $k_r\in[0.005,0.2]$; {\it right} surface plot of a deep statistical model checking estimation over a 16$\times$ 16 grid (5000 simulations per grid point).}
\label{fig:SIRdouble}
\end{figure}

\subsection{Prokaryotic Gene Expression}\label{lacz}

In this section, we consider a more complex model that highlights the computational benefits of smoothed model checking.
That is the model of prokaryotic gene expression \cite{Kierzek2002}, which captures LacZ protein synthesis in E.\ coli.
We perform a series of experiments that demonstrate that the expression of LacZ is affected by varying the transcription and translation initiation frequencies.

This system is modelled as a population CTMC, in which the state space is described by a vector $\vec{X}$ that counts the molecular populations of the species considered.
The dynamics of the CTMC are described by a list of transitions rules, or reactions, together with their rate functions.
We represent them in the biochemical notation style (see e.g. \cite{Gillespie1977}).
All rates of this model follow the law of mass action.  

\begin{figure}[!ht]
\begin{center}
\includegraphics[width=0.8\textwidth]{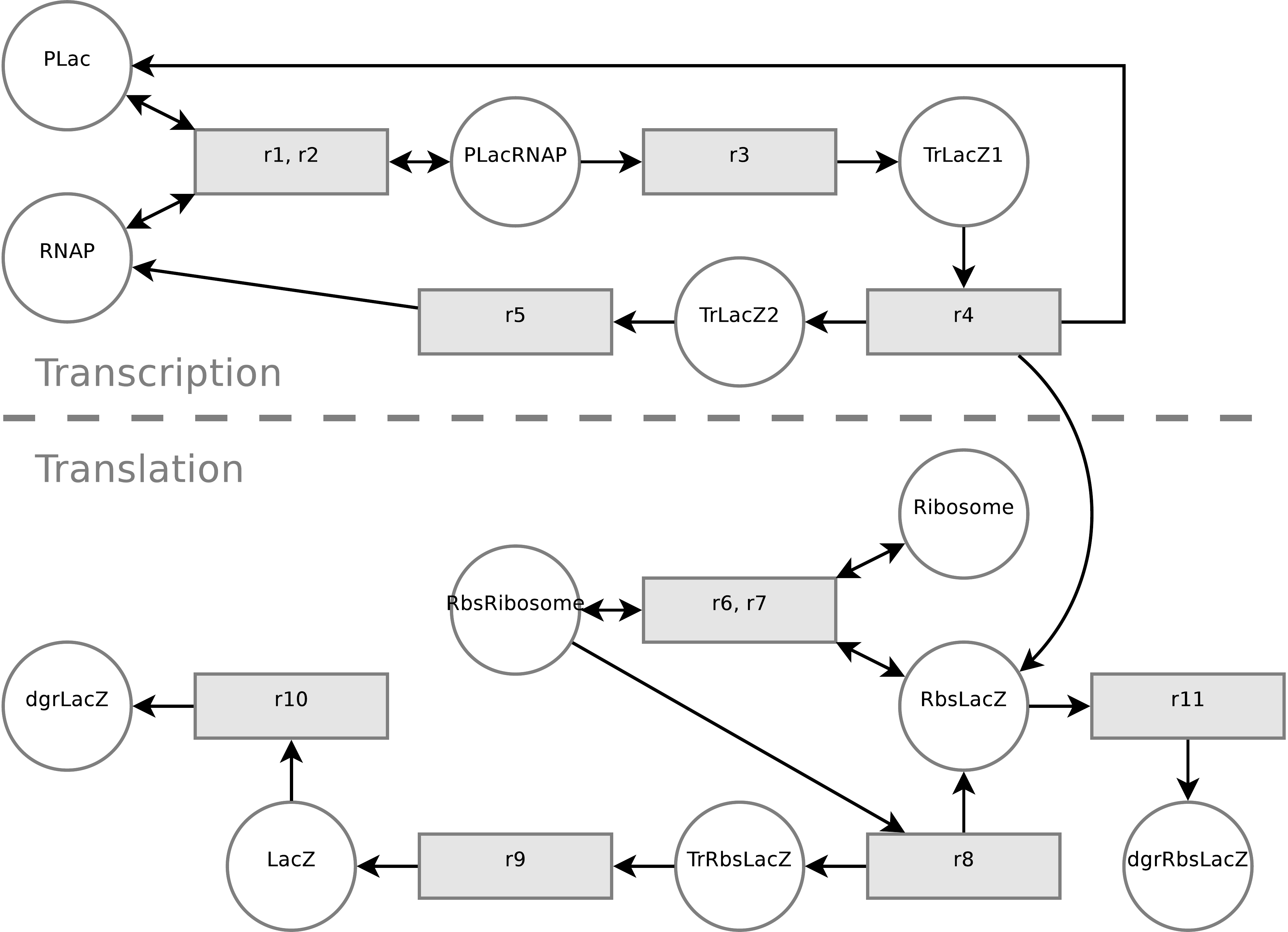}
\end{center}
\caption{Schematic representation of LacZ expression \label{fig:lacz_figure}}
\end{figure}

\begin{table}
\centering
\begin{small}
\renewcommand{\arraystretch}{1.2}
\caption{Rate functions and default parameter values for the LacZ model.}
\label{tab:laczRates}
\begin{tabular}{|r|c|c|}
\hline
Reaction 	& Rate Function						& Default Parameter \\
\hline
\hline
$r_1$		& $k_1 X_{\mathrm{PLac}} X_{\mathrm{RNAP}}$		& $k_1 = 0.17$ \\
$r_2$		& $k_2 X_{\mathrm{PLacRNAP}}$				& $k_2 = 10$ \\
$r_3$		& $k_3 X_{\mathrm{PLacRNAP}}$				& $k_3 = 1$ \\
$r_4$		& $k_4 X_{\mathrm{TrLacZ1}}$				& $k_4 = 1$ \\
$r_5$		& $k_5 X_{\mathrm{TrLacZ2}}$				& $k_5 = 0.015$ \\
$r_6$		& $k_6 X_{\mathrm{Ribosome}} X_{\mathrm{RbsLacZ}}$	& $k_6 = 0.17$ \\
$r_7$		& $k_7 X_{\mathrm{RbsRibosome}}$			& $k_7 = 0.45$ \\
$r_8$		& $k_8 X_{\mathrm{RbsRibosome}}$			& $k_8 = 0.4$ \\
$r_9$		& $k_9 X_{\mathrm{TrRbsLacZ}}$				& $k_9 = 0.015$ \\
$r_{10}$	& $k_{10} X_{\mathrm{LacZ}}$				& $k_{10} = 6.42e-5$ \\
$r_{11}$	& $k_{11} X_{\mathrm{RbsLacZ}}$				& $k_{11} = 0.3$ \\
\hline
\end{tabular}
\end{small}
\end{table}

A schematic representation of the model is depicted in Figure \ref{fig:lacz_figure}.
The rate functions and the default values for the rate constants are summarised in Table \ref{tab:laczRates}.
The first two reactions describe the binding and dissociation of RNA polymerase ($\mathrm{RNAP}$) with the promoter ($\mathrm{PLac}$).
\begin{description}
\item[$r_1$:] $\mathrm{PLac} + \mathrm{RNAP} \xrightarrow{k_1} \mathrm{PLacRNAP}$
\item[$r_2$:] $\mathrm{PLacRNAP} \xrightarrow{k_2} \mathrm{PLac} + \mathrm{RNAP}$
\end{description}
The reactions that follow model RNA transcription, and the release of the promoter and the RNA polymerase.
The product of these reactions is the ribosome binding site for LacZ ($\mathrm{RbsLacZ}$).
The transcription initiation frequency effectively depends on the promoter strength; in the experiments, this is decreased by increasing the RNA polymerase dissociation rate ($k_2$ kinetic constant).
\begin{description}
\item[$r_3$:] $\mathrm{PLacRNAP} \xrightarrow{k_3} \mathrm{TrLacZ1}$
\item[$r_4$:] $\mathrm{TrLacZ1} \xrightarrow{k_4} \mathrm{RbsLacZ} + \mathrm{PLac} + \mathrm{TrLacZ2}$
\item[$r_5$:] $\mathrm{TrLacZ2} \xrightarrow{k_5} \mathrm{RNAP}$
\end{description}
The attachment of a ribosome to a binding site, and the corresponding dissociation, are given as follows:
\begin{description}
\item[$r_6$:] $\mathrm{Ribosome} + \mathrm{RbsLacZ} \xrightarrow{k_6} \mathrm{RbsRibosome}$
\item[$r_7$:] $\mathrm{RbsRibosome} \xrightarrow{k_7} \mathrm{Ribosome} + \mathrm{RbsLacZ}$
\end{description}
The following reactions describe the last two steps in gene expression: translation and protein synthesis.
In the experiments, the translation initiation frequency is decreased by increasing the ribosome dissociation rate ($k_7$ kinetic constant).
\begin{description}
\item[$r_8$:] $\mathrm{RbsRibosome} \xrightarrow{k_8} \mathrm{TrRbsLacZ} + \mathrm{RbsLacZ}$
\item[$r_9$:] $\mathrm{TrRbsLacZ} \xrightarrow{k_9} \mathrm{LacZ}$
\end{description}
Finally, the reactions that follow model the degradation of the LacZ protein and the transcribed RNA correspondingly.
\begin{description}
\item[$r_{10}$:] $\mathrm{LacZ} \xrightarrow{k_{10}} \mathrm{dgrLacZ}$
\item[$r_{11}$:] $\mathrm{RbsLacZ} \xrightarrow{k_{11}} \mathrm{dgrRbsLacZ}$
\end{description}

\subsubsection{Measuring Stochastic Fluctuations}

The authors of the original paper \cite{Kierzek2002} explored how the transcription and translation initiation frequencies affect the stochastic fluctuations in gene expression via experimentation with a range of parameter values.
We shall repeat this experiment in a slightly different setting, where enquiries regarding stochastic variation are expressed in terms of temporal logic, and results are obtained via smoothed model checking, rather than na\"{i}ve exploration of the parameter space.

In this work, we shall measure stochastic variation using the following MiTL property, which states that at least at one interval of $5000$ seconds, the difference of the LacZ population is always smaller than $10$\% of the expected value:
\begin{equation}
\label{eq:variationFormula}
\phi =  \eventually{0}{21000} \left(\always{0}{5000} \left|X_{\mathrm{LacZ}} - E[X_{\mathrm{LacZ}}]\right| <  0.1 \times E[X_{\mathrm{LacZ}}]\right)
\end{equation}
The expectation $E[X_{\mathrm{LacZ}}]$ denotes the expanded value of LacZ over the course of time.
In our experiments, these expectations have been approximated by statistical means.

We first used smoothed model checking to explore each of the two parameters individually.
We varied the RNA polymerase dissociation rate $k_2$ in the interval $k_2 \in [10,100000]$, holding $k_7$ fixed to $k_7 = 0.45$, while $k_7$ has been varied within $[0.45,4500]$ with $k_1=10$. 
In each case, we sampled 10 runs from a grid of 25 points evenly distributed in the parameter domain, for a total of $250$ runs, and applied smoothed model checking to obtain a prediction of the satisfaction probability of $\phi$ as a function of the parameter on $100$ points.

Results are shown in Figure \ref{fig:lacz_variation_signle}, where we also plot the 95\% confidence bounds of the estimate.
The predicted satisfaction function is compared with estimates of the satisfaction probability at the $25$ values used for training the Gaussian process using standard statistical model checking with $100$ runs per point.
The running times for both standard and smoothed MC are summarised in Table \ref{tab:timesVariation}.
The smoothed MC approach has been broken down into three steps: the initial SMC estimations, the hyperparameter optimisation for the GP, and the GP prediction over the specified grid.

\begin{figure}[htp]
\includegraphics[width=0.48\textwidth]{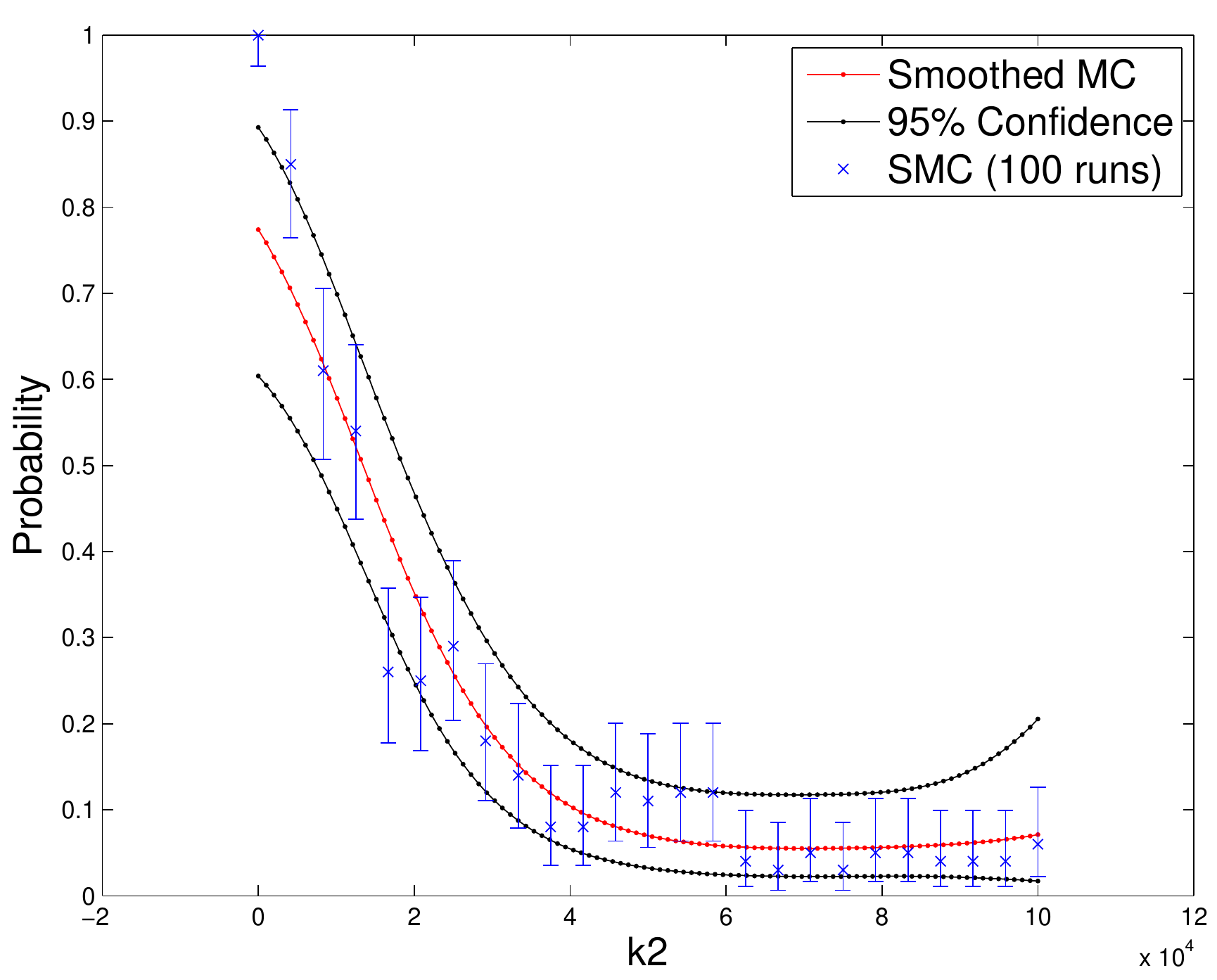}
\includegraphics[width=0.48\textwidth]{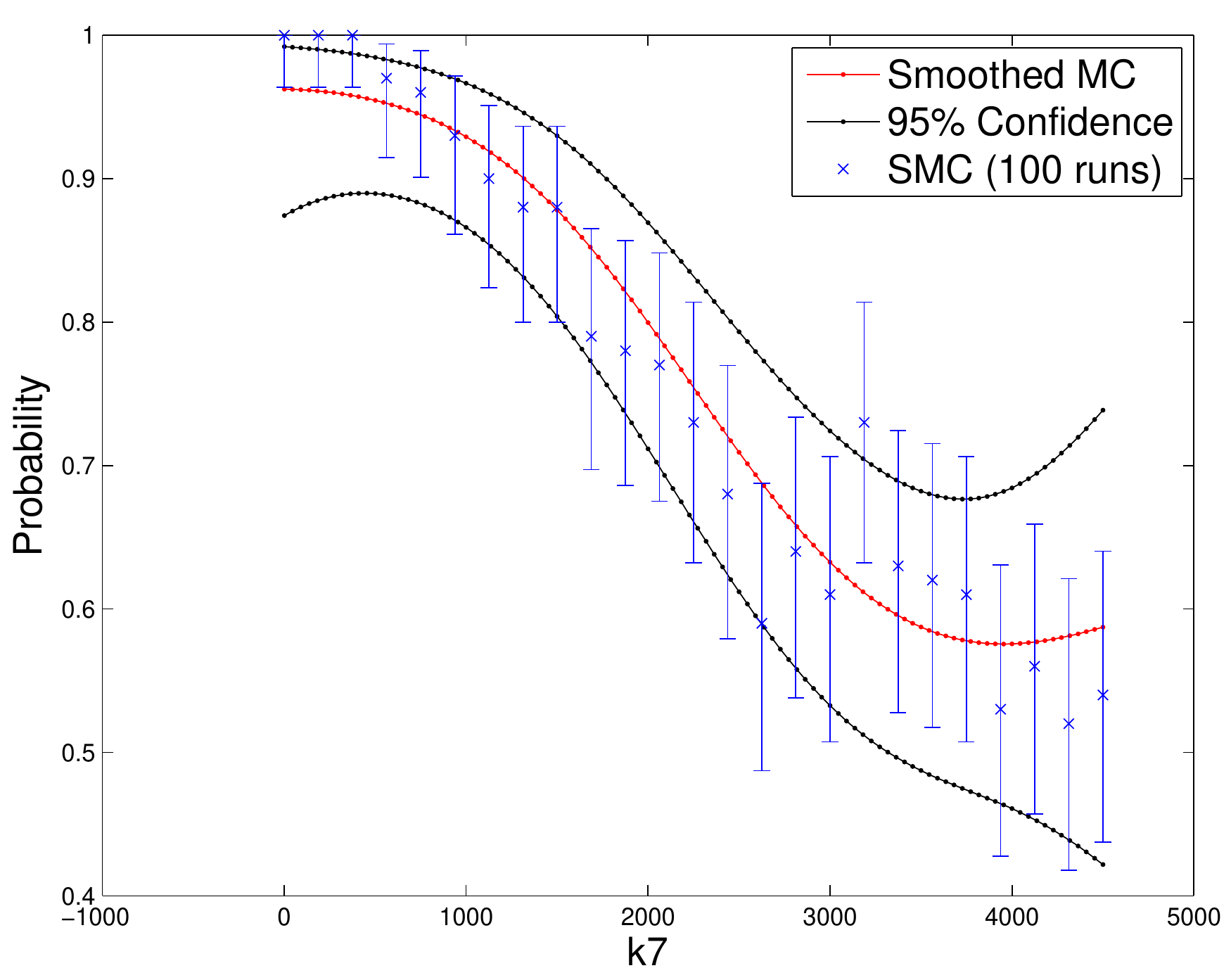}
\caption{Smoothed model checking reconstruction of the satisfaction probability of the variation formula in (\ref{eq:variationFormula}), as a function of $k_2 \in [10,100000]$ (left) and $k_7 \in [0.45,4500]$ (right).}
\label{fig:lacz_variation_signle}
\end{figure}

\begin{table}[!t]
\centering
\renewcommand{\arraystretch}{1.2}
\caption{Running times of model checking the variation formula in (\ref{eq:variationFormula}) for the LacZ model. The experiments have been performed in an Intel\textregistered\ Xeon\texttrademark\ E5410 @ 2.33GHz PC running Linux.
}
\label{tab:timesVariation}
\begin{threeparttable}[b]
\begin{tabular}{|c|r||>{\centering}m{0.16\textwidth}>{\centering}m{0.16\textwidth}c|}
\hline
\multicolumn{2}{|c||}{Method} 			&$k_2$ 			&$k_7$			&$k_2, k_7$ \\
\hline
\hline
\multirow{4}{*}{Smoothed MC} & SMC (10 runs)	& 32 sec\tnote{*}	& 92 sec\tnote{*}	& 160 sec\tnote{**}\ \ \ \\
			     & Hyperparam.\ Opt.& 3 sec 		& 3 sec			& 17 sec \\
			     & GP Prediction    & 0.05 sec\tnote{\dag} 	& 0.05 sec\tnote{\dag}	& 0.3 sec\tnote{\ddag} \\
			     & Total 		& 35 sec 	  	& 95 sec 		& 177 sec \\
\hline
\multicolumn{2}{|c||}{Na\"{i}ve SMC (100 runs)} & 320 sec\tnote{*}	& 920 sec\tnote{*}	& 1600 sec\tnote{**}\ \ \ \  \\
\hline
\end{tabular}
\begin{small}
\begin{tablenotes}
\begin{minipage}{0.5\linewidth}
	\item[*] Grid of $25$ parameter values.
	\item[**] Grid of $100$ parameter values.
\end{minipage}%
\begin{minipage}{0.5\linewidth}
	\item[\dag] Grid of $100$ parameter values.
	\item[\ddag] Grid of $400$ parameter values.
\end{minipage}
\end{tablenotes}
\end{small}
\end{threeparttable}
\end{table}

In the second experiment, we vary both $k_2 \in [10,10000]$ and $k_7 \in [0.45,450]$ simultaneously.
We have sampled $100$ regularly distributed parameter values, for each of which we run $10$ simulation runs.
The estimated satisfaction function over $400$ points is plotted in Figure \ref{fig:lacz_variation_double}.
On the right-hand side of Figure \ref{fig:lacz_variation_double}, we also show the estimated probabilities over $100$ values via standard model checking using $100$ simulation runs.
Both methods are in agreement regarding the shape of the satisfaction function.
With smoothed model checking however, we obtain a more refined grid of estimations in a fraction of the time required for na\"{i}ve parameter exploration, as we also see in Table \ref{tab:timesVariation}.

\begin{figure}[htp]
\includegraphics[width=0.48\textwidth]{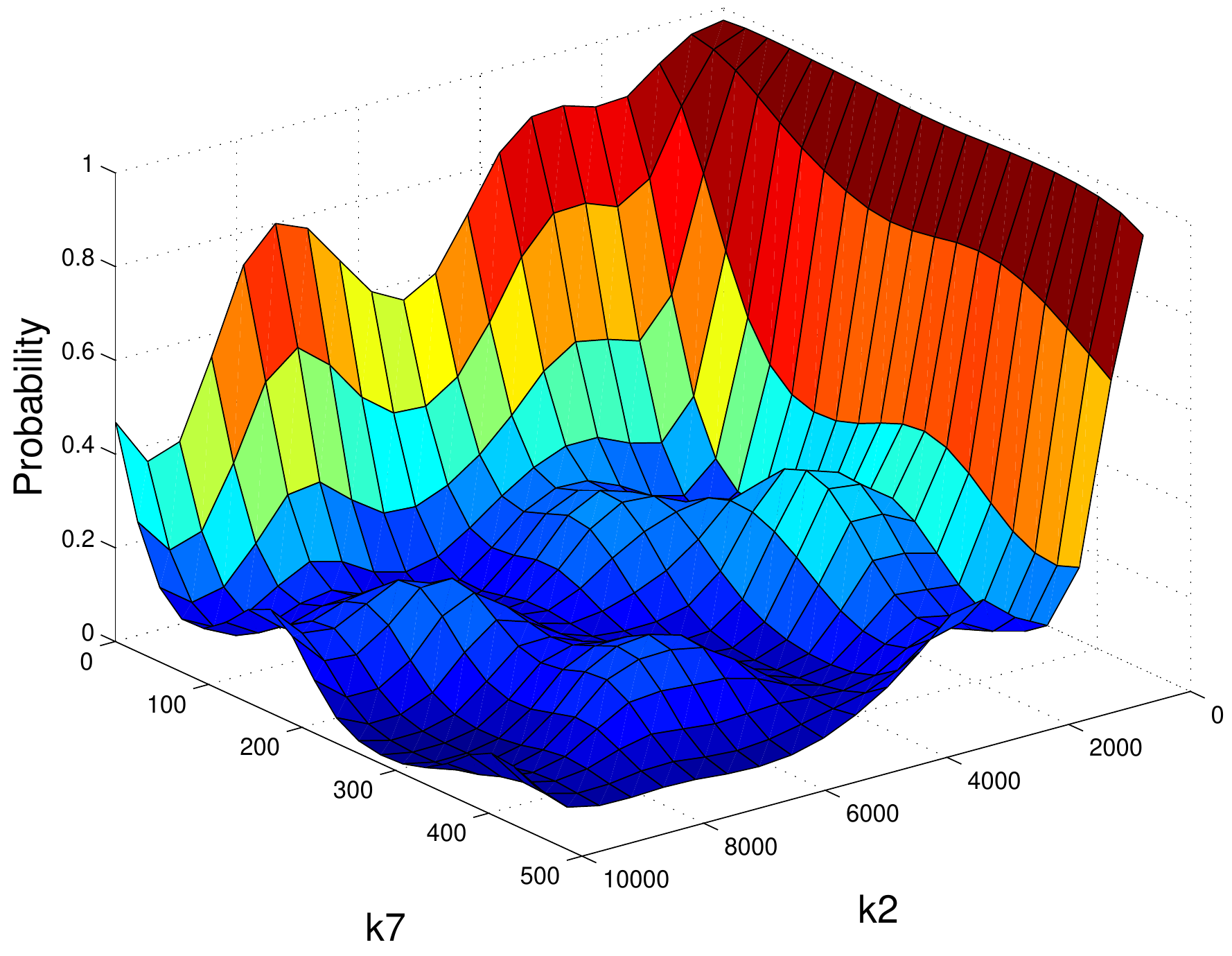}
\includegraphics[width=0.48\textwidth]{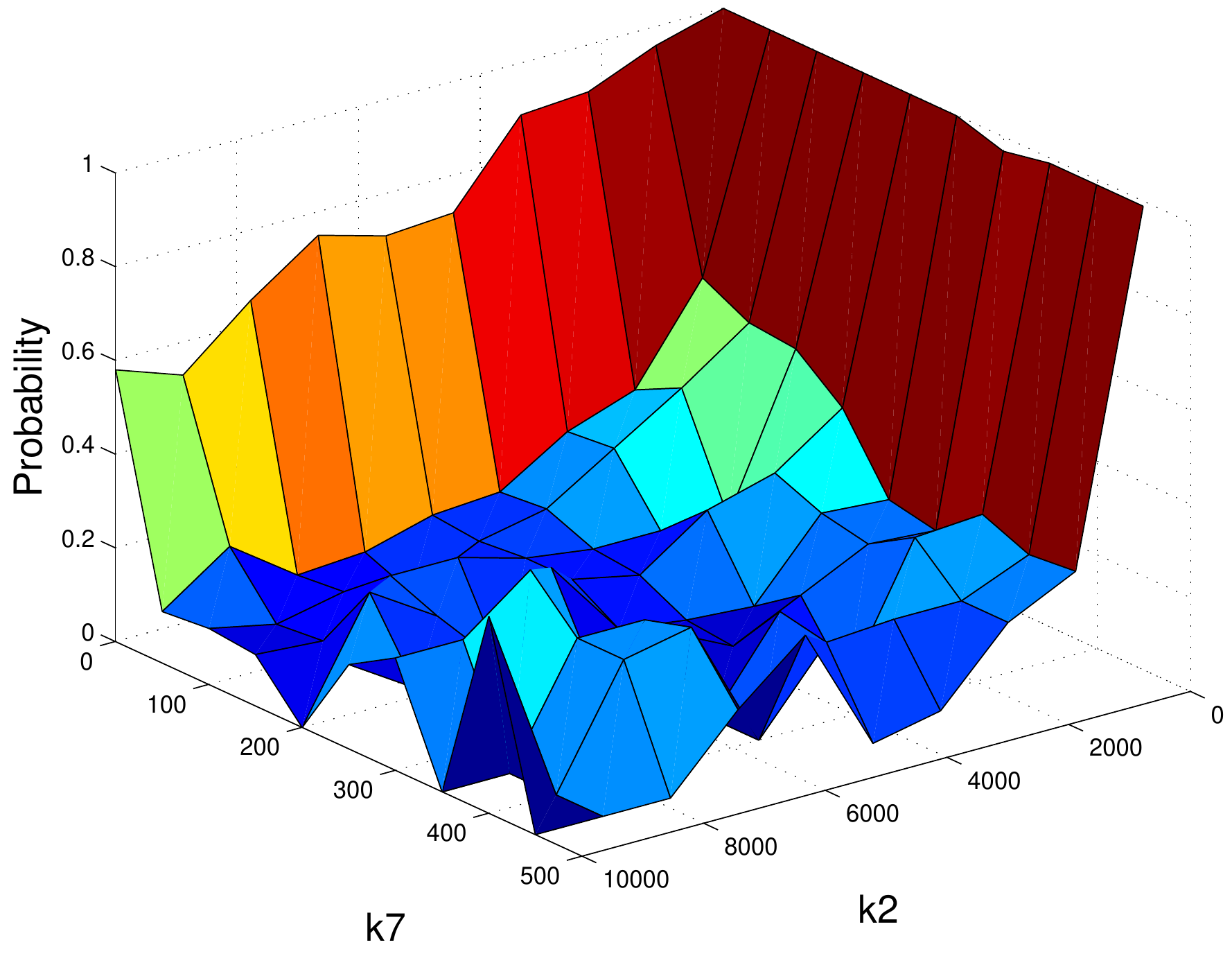}
\caption{Left: Smoothed MC reconstruction of the satisfaction probability of the variation formula in (\ref{eq:variationFormula}), as a function of $k_2 \in [10,10000]$ and $k_7 \in [0.45,450]$. Right: the same satisfaction probability function as estimated by na\"{i}ve SMC over a coarser grid of parameter values.}
\label{fig:lacz_variation_double}
\end{figure}

As a final remark on measuring stochastic fluctuations in gene expression, the satisfaction probability of $\phi$, which denotes whether a trajectory deviates more than 10\% from the average, decreases as the dissociation rates for RNAP and Ribosome are increased.
In other words, decreasing the transcription and translation initiation frequencies introduces more stochastic variation in the expression of LacZ.
The same conclusion has been drawn by Kierzek \cite{Kierzek2002} by measuring the variation coefficient for the population of LacZ.

\subsubsection{Detecting Bursts of Gene Expression}

As reported in \cite{Kierzek2002}, decreasing the transcription initiation rate results in a behaviour that is characterised by irregular \lq\lq bursts\rq\rq\ of gene expression.
We formalise the concept of burst as a MiTL formula which monitors rapid increases in LacZ counts, followed by long periods of lack of protein production. The resulting formula is 
\begin{equation}
\label{eq:burstFormula}
\phi =  \eventually{16000}{21000} \left( \Delta X_{\mathrm{LacZ}} >  0 \; \wedge \; \always{10}{2000} \Delta X_{\mathrm{LacZ}} \leq  0 \right)
\end{equation}
which will be true if a burst of gene expression is present in a particular time-window, from $16000$ to $21000$ seconds.

The parameters explored are the RNA polymerase dissociation rate $k_2$, and the ribosome dissociation rate $k_7$.
Figure \ref{fig:lacz_burst_signle} shows the results of smoothed MC over each of the two parameters individually.
The $k_2$ parameter has been varied in the interval $[10,10000]$, holding $k_7$ fixed to $k_7 = 0.45$, while $k_7$ has been varied within $[0.45,450]$ with $k_1=10$. 
For each experiment, we have sampled $25$ regularly distributed parameter values, and we have queried for the satisfaction probability at $100$ parameter values.
We also plot the 95\% confidence bounds of the estimate.

The predicted satisfaction function is compared with estimates of the satisfaction probability at the $25$ points using standard SMC with $100$ runs per point.
These are also plotted in Figure \ref{fig:lacz_burst_signle} as points with error bars.
We see that the majority of na\"{i}ve SMC estimations lie within the 95\% confidence intervals of the GP.
Most importantly, our approach has been significantly more efficient.
Table \ref{tab:timesBurst} summarises the running times for smoothed MC, which have bee broken down to initial SMC estimations, hyperparameter optimisation and GP regression.
The total time required is only a fraction of that of the standard approach.


\begin{table}[!t]
\centering
\renewcommand{\arraystretch}{1.2}
\caption{Running times of model checking the expression burst formula in (\ref{eq:burstFormula}) for the LacZ model. The experiments have been performed in an Intel\textregistered\ Xeon\texttrademark\ E5410 @ 2.33GHz PC running Linux.}
\label{tab:timesBurst}
\begin{threeparttable}
\begin{tabular}{|c|r||>{\centering}m{0.16\textwidth}>{\centering}m{0.16\textwidth}c|}
\hline
\multicolumn{2}{|c||}{Method} 			&$k_2$ 			&$k_7$			&$k_2, k_7$ \\
\hline
\hline
\multirow{4}{*}{Smoothed MC} & SMC (10 runs)	& 16 sec\tnote{*} 	& 34 sec\tnote{*}	& 77 sec\tnote{**} \\
			     & Hyperparam.\ Opt.& 2 sec 	  	& 2 sec			& 23 sec \\
			     & GP Prediction    & 0.07 sec\tnote{\dag}	& 0.06 sec\tnote{\dag}	& 0.28 sec\tnote{\ddag} \\
			     & Total 		& 18 sec 	 	& 36 sec 		& 100 sec \\
\hline
\multicolumn{2}{|c||}{Na\"{i}ve SMC (100 runs)} & 165 sec\tnote{*}\ \ 	& 346 sec\tnote{*}\ \ \	& 772 sec\tnote{**}\ \ \ \ \\
\hline
\end{tabular}
\begin{small}
\begin{tablenotes}
\begin{minipage}{0.5\linewidth}
	\item[*] Grid of $25$ parameter values.
	\item[**] Grid of $100$ parameter values.
\end{minipage}
\begin{minipage}{0.5\linewidth}
	\item[\dag] Grid of $100$ parameter values.
	\item[\ddag] Grid of $400$ parameter values.
\end{minipage}
\end{tablenotes}
\end{small}
\end{threeparttable}
\end{table}

\begin{figure}[htp]
\includegraphics[width=0.48\textwidth]{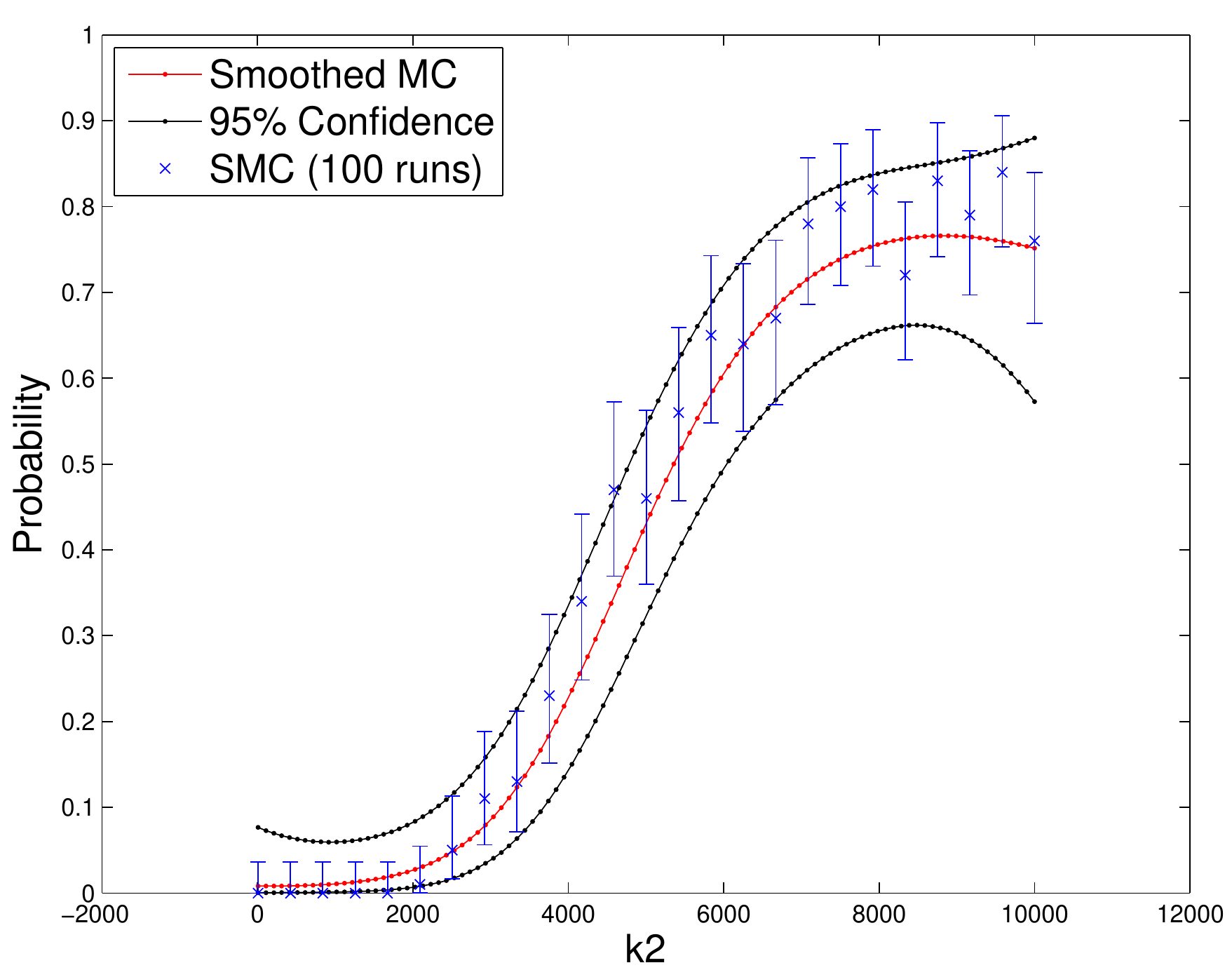}
\includegraphics[width=0.48\textwidth]{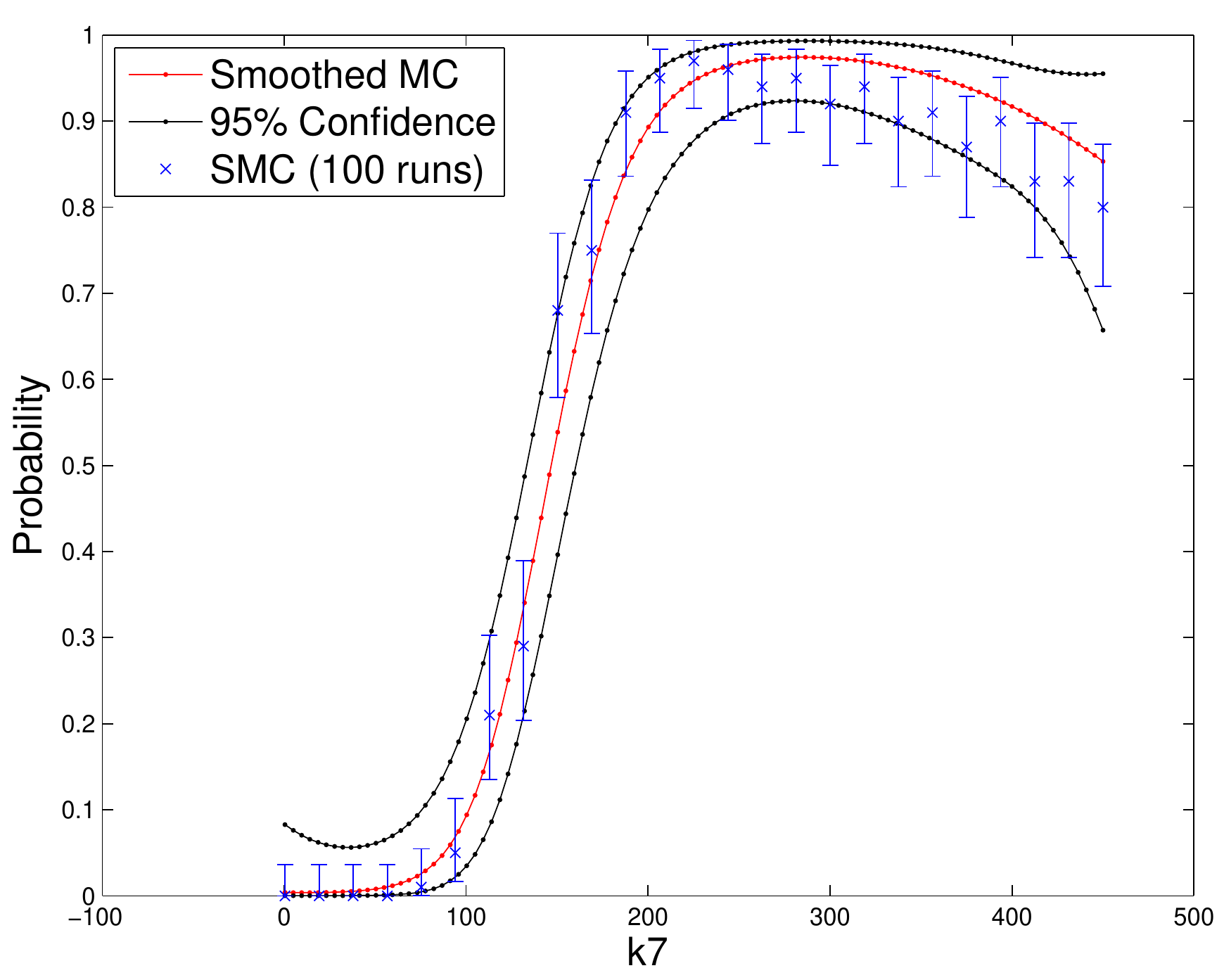}
\caption{Smoothed model checking reconstruction of the satisfaction probability of the expression burst formula in (\ref{eq:burstFormula}), as a function of $k_2 \in [10,10000]$ (left) and $k_7 \in [0.45,450]$ (right).}
\label{fig:lacz_burst_signle}
\end{figure}

In the second experiment, we vary both $k_2 \in [10,10000]$ and $k_7 \in [0.45,450]$ simultaneously.
We have sampled $100$ regularly distributed parameter values, for each of which we run $10$ simulation runs, and we have queried for the satisfaction probability at $400$ parameter values.
The estimated satisfaction function over $400$ points is plotted in Figure \ref{fig:lacz_burst_double}.
On the right-hand side of Figure \ref{fig:lacz_burst_double}, we also show the estimated probabilities over $100$ values via standard model checking using $100$ simulation runs.
Both methods are in agreement regarding the shape of the satisfaction function.
With smoothed model checking however, we obtain a more refined grid of estimations in a fraction of the time required for na\"{i}ve parameter exploration, as seen in Table \ref{tab:timesBurst}.
We used our own implementation of SMC; computational savings are possible using more efficient implementations but these will affect equally SMC and Smoothed MC results, since the SMC runs in Smoothed MC account for over 85\% of its running time.

\begin{figure}[htp]
\includegraphics[width=0.48\textwidth]{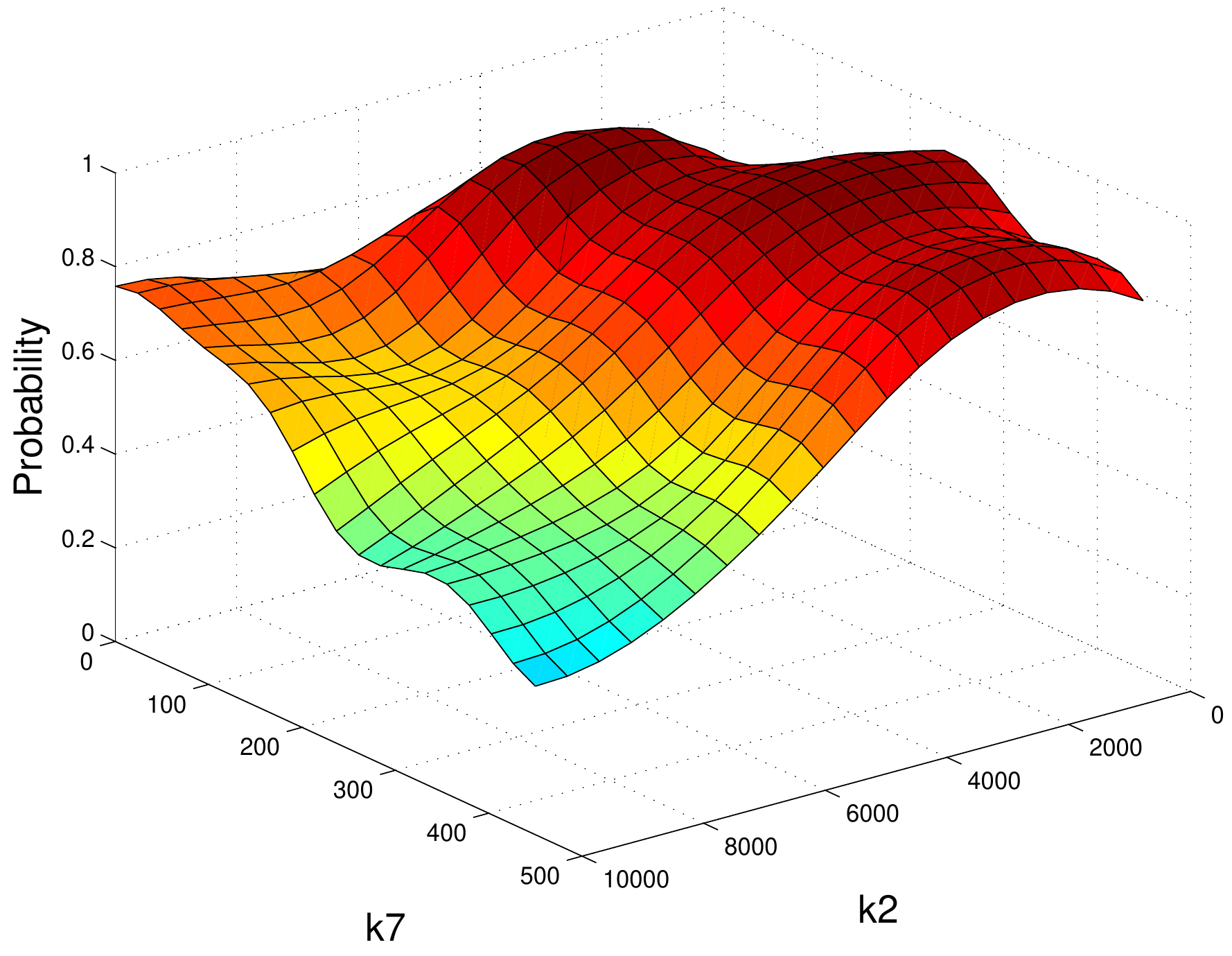}
\includegraphics[width=0.48\textwidth]{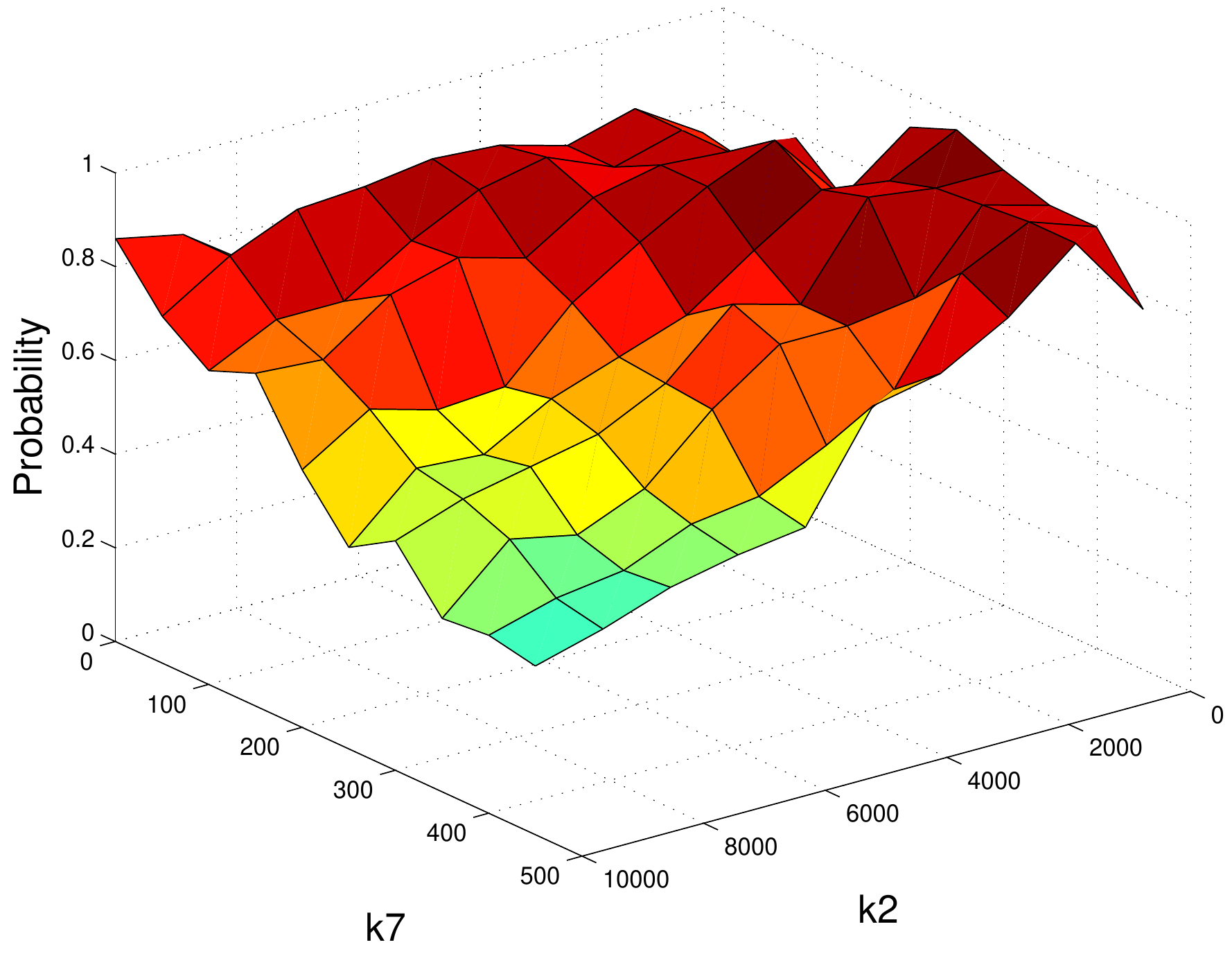}
\caption{Left: smoothed model checking reconstruction of the satisfaction probability of the expression burst formula in (\ref{eq:burstFormula}), as a function of $k_2 \in [10,10000]$ and $k_7 \in [0.45,450]$. Right: the same satisfaction probability function as estimated by na\"{i}ve statistical model checking over a coarser grid of parameter values.}
\label{fig:lacz_burst_double}
\end{figure}


\section{Related work}
Dynamical systems with parametric uncertainty have attracted considerable attention in recent years: some of the most relevant related concepts are Constrained Markov Chains (CMC)\cite{caillaud2011constraint}, Continuous-Time Markov Decision Processes (CTMDP)\cite{Baier2005,Guo}, and  Imprecise CTMC (ICTMC, \cite{DeCooman2009}). Our scenario is somehow simpler than CTMDP and ICTMC, because we assume that there is a true, but unknown, parameter value, which remains constant during the dynamics, while in CTMDPs and ICTMC, instead, the parameter value is allowed to change after every jump. As far as model checking is considered, there have been few attempts to deal with uncertain or imprecise CTMC. In \cite{Katoen2007}, the authors  develop an abstraction framework of standard CTMCs (with known parameters) based on a three valued logic and on ICTMC, and they consider the numerical model checking problem of the reduced ICTMC model for CSL logical properties, exploiting efficient CTMDP algorithms \cite{Baier2005}. The result of such model checking procedure, as far as probabilities of path formulae are concerned, is an interval of probability values. Similar in spirit is the method of \cite{benedikt_ltl_2013}, where authors provide an Expectation-Maximisation algorithm to compute maximum and minimum satisfaction probabilities for LTL formulae.  A statistical model checking scheme for MDPs has been presented in \cite{henriques_statistical_2012}.
More related to this work is the approach for uncertain CTMC and CSL model checking of Brim et al \cite{Brim2013}, in which the authors split the parameter space into small regions and compute an upper and a lower bound on the satisfaction probability in each region. A combination of this approach with statistical sampling around putative maxima for optimisation purposes has been developed in \cite{milan14}.  However, all these approaches are computationally  intensive. As far as we know, our approach is the first one to explicitly leverage the smoothness of the satisfaction function to develop statistical methods to estimate the entire satisfaction function of the probability distribution. 

The use of model checking for system identification (learning parameters from data) was first proposed in \cite{Donaldson2008}, which used an optimisation scheme based on genetic algorithms to select plausible parameter values. More recently, \cite{Bortolussi2013qest} proposed a rigorous statistical framework for property-based system identification (and system design), using a GP-based optimisation method to optimise the likelihood of observed properties, encoded as MiTL formulae. System design in a model checking framework has also been carried out in \cite{Bartocci2013}, using again a GP-based optimisation routine, but exploiting a stochastic version of the  quantitative semantics of MiTL of \cite{donze2010robust}.  Model checking methods within optimisation routines have also been used in the related problems of model repair \cite{Bartocci2011} and of learning parametric formulae describing a system \cite{Bartocci:learning13}. 

%

%
%

\section{Conclusions}
Verification of logical properties over uncertain stochastic processes is an important task in theoretical computer science. This paper offers contributions on two different levels to this model checking problem: from the theoretical point of view, it refocuses the question of model checking from estimating a number (the satisfaction probability of a formula) to estimating a {\it function}, which we term the satisfaction function of the formula. Our main theoretical result characterises the satisfaction function as a differentiable function of the parameters of the process, under mild conditions. From the practical point of view, this smoothness result offers a powerful new way to perform statistical model checking: intuitively, the power of statistical model checking, deriving from the law of large numbers, can be increased by simultaneous sampling at nearby values of the parameters. We show that Gaussian Processes provide a natural prior distribution over smooth functions which can be employed in a Bayesian statistical model checking framework to evaluate a whole satisfaction function from a relatively small number of samples. We term this novel approach to model checking Smoothed Model Checking, and show in an empirical section that indeed this approach can be extremely efficient and accurate.

The availability of quick methods for estimating a satisfaction function could be of considerable use in tackling other computational problems: for example, such a method could be used to inform system design approaches, or to guide effectively and efficiently importance sampling strategies for SMC \cite{jegourel_cross-entropy_2012,zuliani_rare-event_2012} to identify combinations of parameters giving large probability to the rare event we wish to estimate. We stress, however, that the purpose of importance sampling methods is somewhat complementary to our aim: while importance sampling focuses on producing accurate estimates of rare event probabilities at specific values of the parameters, our method provides a global analytical approximation to the satisfaction function. To our knowledge, such global approximations have not been studied so far, and could represent a novel direction in statistical model checking. 

The cross fertilisation of ideas from machine learning and  model checking is increasingly being exploited in the development of novel algorithms in both fields. GP-based methods in particular are increasingly becoming part of the formal modelling repertoire \cite{Bortolussi2013qest,Bartocci:data14,Legay:statistical14}. A practical limitation shared by all GP-based methods is the cubic complexity of the matrix inversion operations needed in GP prediction. Scaling our method to large spaces of parameters would therefore necessitate the use of approximate methods to alleviate the computational burden; we envisage that ideas from sparse GP prediction could be very effective in this scenario \cite{Rasmussen2006}.




\clearpage

\bibliographystyle{plain}
\bibliography{../TACAS15/TACAS15}

\end{document}